\newcommand{\noun}[1]{\textsc{#1}}
\providecommand{\tabularnewline}{\\}
\providecommand{\algorithmname}{Algorithm}
\newtheorem{prop}{Proposition}[section]
\newtheorem{defin}{Definition}[section]
\date{}
\begin{document}

\title{Discriminative variable selection for clustering\\
 with the sparse Fisher-EM algorithm}

\author{Charles \noun{Bouveyron$^{*}$ \&} Camille \noun{Brunet$^{\dagger}$}}

\maketitle
\noindent \begin{center}
\vspace{-3ex}%
\begin{minipage}[t]{0.47\columnwidth}%
\begin{center}
\noun{$^{*}$ }{\small Laboratoire SAMM, EA 4543}\\
{\small{} Université Paris 1 Panthéon-Sorbonne}
\par\end{center}%
\end{minipage}%
\begin{minipage}[t]{0.47\columnwidth}%
\begin{center}
\noun{$^{\dagger}$}{\small{} Equipe Modal'X, EA 3454}\\
{\small Université Paris Ouest Nanterre}
\par\end{center}%
\end{minipage}\vspace{3ex}
\par\end{center}
\begin{abstract}
The interest in variable selection for clustering has increased recently
due to the growing need in clustering high-dimensional data. Variable
selection allows in particular to ease both the clustering and the
interpretation of the results. Existing approaches have demonstrated
the efficiency of variable selection for clustering but turn out to
be either very time consuming or not sparse enough in high-dimensional
spaces. This work proposes to perform a selection of the discriminative
variables by introducing sparsity in the loading matrix of the Fisher-EM
algorithm. This clustering method has been recently proposed for the
simultaneous visualization and clustering of high-dimensional data.
It is based on a latent mixture model which fits the data into a low-dimensional
discriminative subspace. Three different approaches are proposed in
this work to introduce sparsity in the orientation matrix of the discriminative
subspace through $\ell_{1}$-type penalizations. Experimental comparisons
with existing approaches on simulated and real-world data sets demonstrate
the interest of the proposed methodology. An application to the segmentation
of hyperspectral images of the planet Mars is also presented.
\end{abstract}

\section{Introduction}

\begin{sloppypar}With the exponential growth of measurement capacities,
the observed data are nowadays frequently high-dimensional and clustering
such data remains a challenging problem. In particular, when considering
the mixture model context, the corresponding clustering methods show
a disappointing behavior in high-dimensional spaces. They suffer from
the well-known \textit{curse of dimensionality}~\cite{Bellman57}
which is mainly due to the fact that model-based clustering methods
are dramatically over-parametrized in high-dimensional spaces. Moreover,
even though we dispose of many variables to describe the\textit{\textcolor{red}{{}
}}studied phenomenon, most of the time, only a small subset of these
original variables are in\textit{\textcolor{red}{{} }}fact relevant.\end{sloppypar}

Several recent works have been interested to simultaneously cluster
data and reduce their dimensionality by selecting relevant variables
for the clustering task. A common assumption to these works is that
the true underlying clusters are assumed to differ only with respect
to some of the original features. The clustering task aims therefore
to group the data on a subset of relevant features. This presents
two practical advantages: clustering results should be improved by
the removing of non informative features and the interpretation of
the obtained clusters should be eased by the meaning of retained variables.
In the literature, variable selection for clustering is handled in
two different ways.

On the one hand, some authors such as~\cite{Law04,Liu03,Maugis09,Raftery06}
tackle the problem of variable selection for model-based clustering
within a Bayesian framework. In particular, the determination of the
role of each variable is recast as a model selection problem. A first
framework was proposed by Raftery and Dean~\cite{Raftery06} in which
two kinds of subsets of variables are defined: a subset of relevant
variables and a subset of irrelevant variables which are independent
from the clustering but which can be explained from the relevant variables
through a linear regression. An extension of the previous work has
then been proposed by Maugis \emph{et al.}~\cite{Maugis09} who consider
two kinds of irrelevant variables: the ones which can be explained
by a linear regression from a subset of the clustering variables and
finally a set of irrelevant variables which are totally independent
of all the relevant variables. The models in competition are afterward
compared with the integrated log-likelihood \emph{via} a BIC approximation.
Even though these approaches present good results in most practical
situations, their computational times are nevertheless very high and
can lead to an intractable procedure in the case of high-dimensional
data.

On the other hand, penalized clustering criteria have also been proposed
to deal with the problem of variable selection in clustering. In the
Gaussian mixture model context, several works, such as \cite{Pan07,Wang08,Xie08,Zhang09}
in particular, introduced a penalty term in the log-likelihood function
in order to yield sparsity in the features. The penalty function can
take different forms according to the constraints imposed on the structure
of the covariance matrices. The introduction of a penalty term in
the log-likelihood function was also used in the mixture of factor
analyzers approaches, such as in~\cite{Galimberti2009,Xie10}. More
recently, Witten and Tibshirani~\cite{Witten10} proposed a general
non-probabilistic framework for variable selection in clustering,
based on a general penalized criterion, which governs both variable
selection and clustering. It appears nevertheless that the results
of such procedures are usually not sparse enough and select a large
number of the original variables, especially in the case of high-dimensional
data.

Other approaches focus on simultaneously clustering the data and reducing
their dimensionality by feature extraction rather than feature selection.
We can cite in particular, the subspace clustering methods \cite{Bouveyron07b,Ghahramani97,McNicholas2008,McLachlan2003,Montanari2010,Yoshida04}
which are based on probabilistic frameworks and model each group in
a specific and low-dimensional subspace. Even though these methods
are very efficient in practice, they present nevertheless several
limitations regarding the understanding and the interpretation of
the clusters. Indeed, in most of subspace clustering approaches, each
group is modeled in its specific subspace which makes difficult a
global visualization of the clustered data. Even though some approaches~\cite{Baek2009,Montanari2010}
model the data in a common and low-dimensional subspace, they choose
the projection matrix such as the variance of the projected data is
maximum and this can not be sufficient to catch discriminative information
about the group structure.

To overcome these limitations, Bouveyron and Brunet~\cite{Bouveyron12a}
recently proposed a new statistical framework which aims to simultaneously
cluster the data and produce a low-dimensional and discriminative
representation of the clustered data. The resulting clustering method,
named the Fisher-EM algorithm, clusters the data into a common latent
subspace of low dimensionality which best discriminates the groups
according to the current fuzzy partition of the data. It is based
on an EM procedure from which an additional step, named F-step, is
introduced to estimate the projection matrix whose columns span the
discriminative latent space. This projection matrix is estimated at
each iteration by maximizing a constrained Fisher's criterion conditionally
to the current soft partition of the data. As reported in~\cite{Bouveyron12a},
the Fisher-EM algorithm turned out to outperform most of the existing
clustering methods while providing a useful visualization of the clustered
data. However, the discriminative latent space is defined by {}``latent
variables'' which are linear combinations of the original variables.
As a consequence, the interpretation of the resulting clusters according
to the original variables is usually difficult. An intuitive way to
avoid such a limitation would be to keep only large loadings variables,
by thresholding for instance. Even though this approach is commonly
used in practice, it has been particularly criticized by Cadima~\cite{Cadima1995}
since it induces some misleading information. Furthermore, it often
happens when dealing with high-dimensional data that a large number
of noisy or non-informative variables are present in the set of the
original variables. Since the latent variables are defined by a linear
combination of the original ones, the noisy variables may remain in
the loadings of the projection matrix and this may produce a deterioration
of the clustering results. 

To overcome these shortcomings, three different approaches are proposed
in this work for introducing sparsity in the Fisher-EM algorithm and
thus select the discriminative variables among the set of original
variables. The remainder of this document is organized as follows.
Section~2 reviews the discriminative latent mixture model of~\cite{Bouveyron12a}
and the Fisher-EM algorithm which was proposed for its inference.
Section~3 develops three different procedures based on $\ell_{1}$
penalties for introducing sparsity into the Fisher-EM algorithm. The
first approach looks for the best sparse approximate of the solution
of the F-step of the Fisher-EM algorithm. The second one recasts the
optimization problem involved of the F-step as a lasso regression-type
problem. The last approach is based on a penalized singular value
decomposition (SVD) of the matrix involved in the constrained Fisher
criterion of the F-step. Numerical experiments are then presented
in Section~4 to highlight the practical behavior of the three sparse
versions of the Fisher-EM algorithm and to compare them to existing
approaches. In section~5, a sparse version of the Fisher-EM algorithm
is applied to the segmentation of hyperspectral images. Section~6
finally provides some concluding remarks and ideas for further works.

\section{The DLM model and the Fisher-EM algorithm\label{sec:The-DLM-model}}

In this section, we briefly review the discriminative latent mixture
(DLM) model~\cite{Bouveyron12a} and its inference algorithm, named
the Fisher-EM algorithm, which models and clusters the data into a
common latent subspace. Conversely to similar approaches, such as~\cite{Bouveyron07,McNicholas2008,Montanari06,Montanari2010,Yoshida04},
this latent subspace is assumed to be discriminative and its intrinsic
dimension is strictly bounded by the number of groups.

\subsection{The DLM model}

\begin{figure}
\begin{centering}
\psfrag{X}{\hspace{-1ex}$X$} \psfrag{Y}{\hspace{-1ex}$Y$}
\psfrag{Z}{\hspace{-1ex}$Z$} \psfrag{pi}{\hspace{-1ex}$\pi$}
\psfrag{mu}{\hspace{0ex}$\mu_k$} \psfrag{sig}{\hspace{0ex}$\Sigma_k$}
\psfrag{V}{\hspace{0ex}$W=[U,V]$} \psfrag{e}{\hspace{-1ex}$\epsilon$}
\psfrag{psi}{\hspace{0ex}$\Psi$}\includegraphics[bb=-5bp -5bp 380bp 283bp,clip,width=0.5\columnwidth]{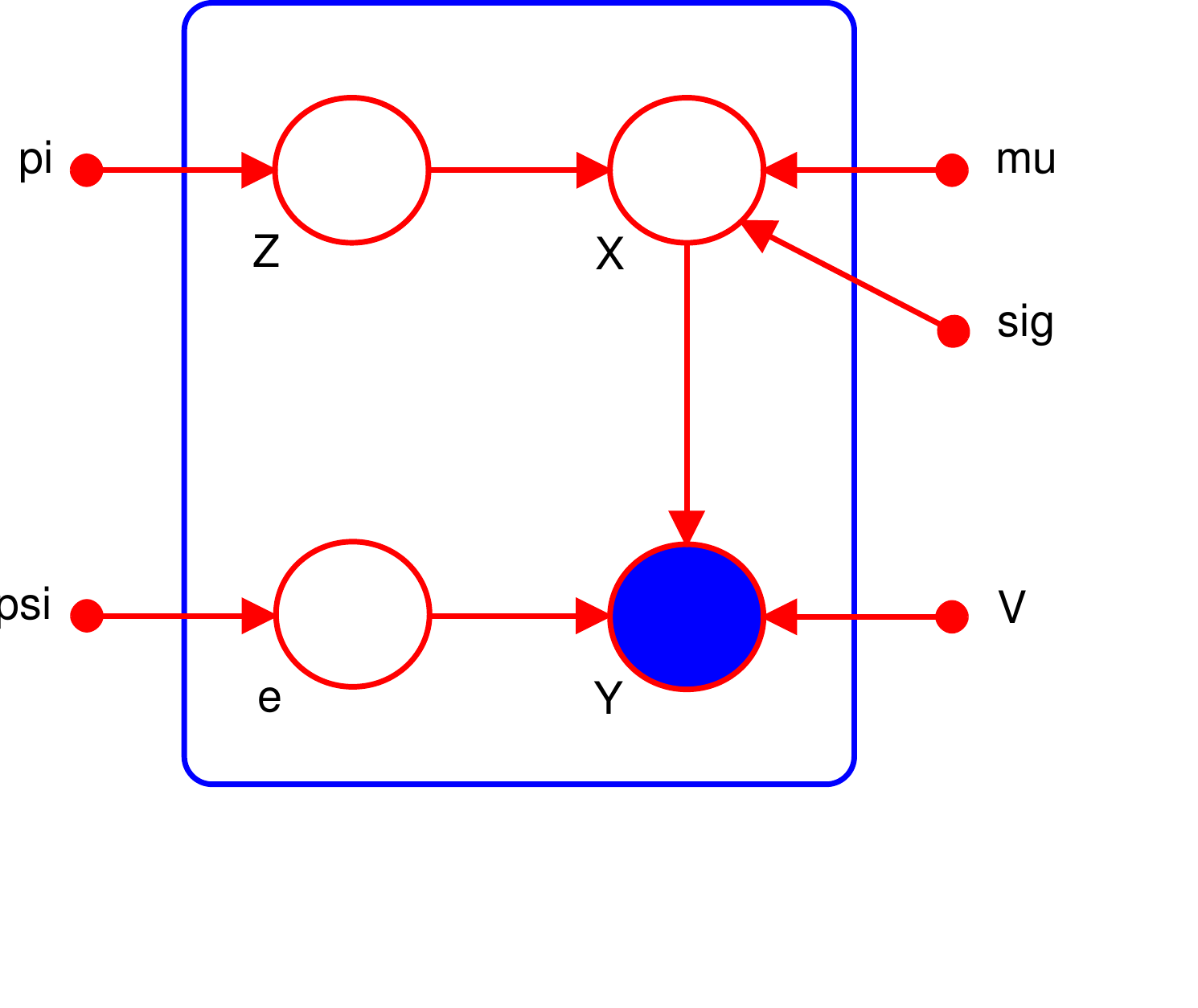}
\par\end{centering}

\caption{Graphical summary of the DLM$_{[\Sigma_{k}\beta]}$ model.\label{fig:Graphical_model}}
\end{figure}

Let $\{y_{1},\dots,y_{n}\}\in\mathbb{R}^{p}$ denote a dataset of
$n$ observations that one wants to cluster into $K$ homogeneous
groups, \emph{i.e.} adjoin to each observation $y_{i}$ a value $z_{i}\in\{1,\dots,K\}$
where $z_{i}=k$ indicates that the observation $y_{i}$ belongs to
the $k$th group. On the one hand, let us assume that $\{y_{1},\dots,y_{n}\}$
are independent observed realizations of a random vector $Y\in\mathbb{R}^{p}$
and that $\{z_{1},\dots,z_{n}\}$ are also independent realizations
of a random variable $Z\in\{1,\dots,K\}$. On the other hand, let
$\mathbb{E}\subset\mathbb{R}^{p}$ denote a latent space assumed to
be the most discriminative subspace of dimension $d\leq K-1$ such
that $\mathbf{0}\in\mathbb{E}$ and $K<p$. Moreover, let $\{x_{1},\dots,x_{n}\}\in\mathbb{E}$
denote the actual data, described in the latent space~$\mathbb{E}$
of dimension $d$, which are in addition presumed to be independent
realizations of an unobserved random vector $X\in\mathbb{E}$. Finally,
the observed variable $Y\in\mathbb{R}^{p}$ and the latent variable
$X\in\mathbb{E}$ are assumed to be linked through a linear transformation:

\begin{equation}
Y=UX+\varepsilon,\label{eq:linear_relationship}
\end{equation}
where $U$ is a $p\times d$ orthonormal matrix common to the $K$
groups and satisfying $U^{t}U=\mathbf{I}_{d}$. The $p$-dimensional
random vector $\varepsilon$ stands for the noise term which models
the non discriminative information and which is assumed to be distributed
according to a centered Gaussian density function with a covariance
matrix $\Psi$ ($\varepsilon\sim\mathcal{N}(0,\Psi)$). Besides, within
the latent space, $X$ is assumed, conditionally to $Z=k$, to be
Gaussian : 
\begin{equation}
X_{|Z=k}\sim\mathcal{N}(\mu_{k},\Sigma_{k})
\end{equation}
where $\mu_{k}\in\mathbb{R}^{d}$ and $\mbox{\ensuremath{\Sigma}}_{k}\in\mathbb{R}^{d\times d}$
are respectively the mean vector and the covariance matrix of the
$k$th group. Given these distribution assumptions and according to
equation~(\ref{eq:linear_relationship}), 
\begin{equation}
Y_{|X,Z=k}\sim\mathcal{N}(UX,\Psi),\label{eq:Y}
\end{equation}
and its marginal distribution is therefore a mixture of Gaussians:
\begin{equation}
f(y)=\sum_{k=1}^{K}\pi_{k}\phi(y;m_{k},S_{k}),\label{eq:GMM}
\end{equation}
where $\pi_{k}$ is the mixing proportion of the $k$th group and
$\phi(.;m_{k},S_{k})$ denotes the multivariate Gaussian density function
parametrized by the mean vector $m_{k}=U\mu_{k}$ and the covariance
matrix $S_{k}=U\Sigma_{k}U^{t}+\Psi$ of the $k$th group. Furthermore,
we define the $p\times p$ matrix $W=[U,V]$ such that $W^{t}W=WW^{t}=\mathbf{I}_{p}$,
where the $(p-d)\times p$ matrix $V$ is an orthogonal complement
of $U$. Finally, the noise covariance matrix $\Psi$ is assumed to
satisfy the conditions $V\Psi V^{t}=\beta\mathbf{I}_{p-d}$ and $U\Psi U^{t}=\mathbf{0}_{d}$,
such that $\Delta_{k}=W^{t}S_{k}W$ has the following form: $$\Delta_k=\left(  \begin{array}{c@{}c} \begin{array}{|ccc|}\hline ~~ & ~~ & ~~ \\  & \Sigma_k &  \\  & & \\ \hline \end{array} & \mathbf{0}\\ \mathbf{0} &  \begin{array}{|cccc|}\hline \beta & & & 0\\ & \ddots & &\\  & & \ddots &\\ 0 & & & \beta\\ \hline \end{array} \end{array}\right)  \begin{array}{cc} \left.\begin{array}{c} \\\\\\\end{array}\right\}  & d \leq K-1\vspace{1.5ex}\\ \left.\begin{array}{c} \\\\\\\\\end{array}\right\}  & (p-d)\end{array}$$These
last conditions imply that the discriminative and the non-discriminative
subspaces are orthogonal, which suggests in practice that all the
relevant clustering information remains in the latent subspace. This
model is referred to by DLM$_{[\Sigma_{k}\beta]}$ in~\cite{Bouveyron12a}
and a graphical summary is given in Figure~\ref{fig:Graphical_model}.

\subsection{A family of parsimonious models}

Several other models can be obtained from the DLM$_{[\Sigma_{k}\beta]}$
model by relaxing or adding constraints on model parameters. Firstly,
it is possible to consider a more general case than the DLM$_{[\Sigma_{k}\beta]}$
by relaxing the constraint on the variance term of the non discriminative
information. Assuming that $\varepsilon_{|Z=k}\sim\mathcal{N}(0,\Psi_{k})$
yields the DLM$_{[\Sigma_{k}\beta_{k}]}$ model which can be useful
in some practical cases. From this extended model, 10 parsimonious
models can be obtained by constraining the parameters $\Sigma_{k}$
and $\beta_{k}$ to be common between and within the groups. For instance,
the covariance matrices $\Sigma_{1},\dots,\Sigma_{K}$ in the latent
space can be assumed to be common across the groups and this sub-model
is referred to by $\mathrm{DLM}_{[\Sigma\beta_{k}]}$. Similarly,
in each group, $\Sigma_{k}$ can be assumed to be diagonal, \emph{i.e.}
$\Sigma_{k}=\mathrm{diag}(\alpha_{k1},\dots,\alpha_{kd})$. This sub-model
is referred to by $\mathrm{DLM}_{[\alpha_{kj}\beta_{k}]}$. These
sub-models can also be declined by considering that the parameter
$\beta$ is common to all classes ($\forall k,\,\beta_{k}=\beta$).
A list of the $12$ different DLM models is given by Table~\ref{Tab:models}
and detailed descriptions can be found in~\cite{Bouveyron12a}. Such
a family yields very parsimonious models and allows, in the same time,
to fit into various situations. In particular, the complexity of the
DLM$_{[\Sigma_{k}\beta_{k}]}$ model mainly depends on the number
of clusters $K$ since the dimensionality of the discriminative subspace
is such that $d\leq K-1$. Notice that the complexity of the DLM$_{[\Sigma_{k}\beta_{k}]}$
grows linearly with $p$ contrary to the traditional Gaussian models
in which the complexity increases with~$p^{2}$. As an illustration,
if we consider the case where $p=100$, $K=4$ and $d=3$, then the
number of parameters to estimate for the DLM$_{[\Sigma_{k}\beta_{k}]}$
is $337$ which is drastically less than in the case of the Full-GMM
($20\,603$ parameters to estimate). For a comparison purpose, Table~\ref{Tab:models}
presents also the complexity of other clustering methods, such as
Mixt-PPCA~\cite{Tipping99b}, MFA~\cite{McLachlan2003}, PGMM~\cite{McNicholas2008},
MCFA~\cite{Baek09} and MCUFSA~\cite{Yoshida06} for which the complexity
grows linearly with $p$ as well.

\begin{table}[t]
\begin{centering}
{\footnotesize\onehalfspacing%
\begin{tabular}{lll}
Model  & Nb. of parameters  & %
\begin{tabular}{l}
$K=4$ and\tabularnewline
$p=100$\tabularnewline
\end{tabular}\tabularnewline
\hline 
$\mathrm{DLM}_{[\Sigma_{k}\beta_{k}]}$ & $(K-1)+K(K-1)+(K-1)(p-K/2)+K^{2}(K-1)/2+K$ & 337\tabularnewline
$\mathrm{DLM}_{[\Sigma_{k}\beta]}$ & $(K-1)+K(K-1)+(K-1)(p-K/2)+K^{2}(K-1)/2+1$ & 334\tabularnewline
$\mathrm{DLM}_{[\Sigma\beta_{k}]}$ & $(K-1)+K(K-1)+(K-1)(p-K/2)+K(K-1)/2+K$ & 319\tabularnewline
$\mathrm{DLM}_{[\Sigma\beta]}$ & $(K-1)+K(K-1)+(K-1)(p-K/2)+K(K-1)/2+1$ & 316\tabularnewline
$\mathrm{DLM}_{[\alpha_{kj}\beta_{k}]}$ & $(K-1)+K(K-1)+(K-1)(p-K/2)+K^{2}$  & 325\tabularnewline
$\mathrm{DLM}_{[\alpha_{kj}\beta]}$ & $(K-1)+K(K-1)+(K-1)(p-K/2)+K(K-1)+1$  & 322\tabularnewline
$\mathrm{DLM}_{[\alpha_{k}\beta_{k}]}$ & $(K-1)+K(K-1)+(K-1)(p-K/2)+2K$  & 317\tabularnewline
$\mathrm{DLM}_{[\alpha_{k}\beta]}$ & $(K-1)+K(K-1)+(K-1)(p-K/2)+K+1$  & 314\tabularnewline
$\mathrm{DLM}_{[\alpha_{j}\beta_{k}]}$ & $(K-1)+K(K-1)+(K-1)(p-K/2)+(K-1)+K$  & 316\tabularnewline
$\mathrm{DLM}_{[\alpha_{j}\beta]}$ & $(K-1)+K(K-1)+(K-1)(p-K/2)+(K-1)+1$  & 313\tabularnewline
$\mathrm{DLM}_{[\alpha\beta_{k}]}$ & $(K-1)+K(K-1)+(K-1)(p-K/2)+K+1$  & 314\tabularnewline
$\mathrm{DLM}_{[\alpha\beta]}$ & $(K-1)+K(K-1)+(K-1)(p-K/2)+2$ & 311\tabularnewline
\hline 
Full-GMM  & $(K-1)+Kp+Kp(p+1)/2$  & 20603 \tabularnewline
Com-GMM  & $(K-1)+Kp+p(p+1)/2$  & 5453\tabularnewline
Diag-GMM & $(K-1)+Kp+Kp$  & 803 \tabularnewline
Sphe-GMM & $(K-1)+Kp+K$  & 407 \tabularnewline
\hline 
MFA & $(K-1)+Kp+Kd[p-(d-1)/2)]+Kp$ & 1991 ($d=3$)\tabularnewline
Mixt-PPCA  & $(K-1)+Kp+K[d(p-(d+1)/2)+d+1]+1$ & 1198 ($d=3$)\tabularnewline
PGMM-CUU & $(K-1)+Kp+d[p-(d+1)/2]+Kp$ & 1100 ($d=3$)\tabularnewline
MCFA  & $(K-1)+Kd+p+d[p-(d+1)/2]+Kd(d+1)/2$ & 433\textcolor{white}{0} ($d=3$)\tabularnewline
MCUFSA  & $(K-1)+Kd+1+d[p-(d+1)/2]+Kd$ & 322\textcolor{white}{0} ($d=3$)\tabularnewline
\hline 
\end{tabular}}
\par\end{centering}

\caption{\label{Tab:models} Number of free parameters to estimate when $d=K-1$
for the DLM models and some classical models (see text for details).}
\end{table}

\subsection{The Fisher-EM algorithm}

An estimation procedure, called the Fisher-EM algorithm, is also proposed
in~\cite{Bouveyron12a} in order to estimate both the discriminative
space and the parameters of the mixture model. This algorithm is based
on the EM algorithm from which an additional step is introduced, between
the E and the M-step. This additional step, named F-step, aims to
compute the projection matrix $U$ whose columns span the discriminative
latent space. The Fisher-EM algorithm has therefore the following
form, at iteration~$q$:

\paragraph*{The E-step}

This step computes the posterior probabilities $t_{ik}^{(q)}$ that
the observations belong to the $K$ groups using the following update
formula:
\begin{equation}
t_{ik}^{(q)}=\hat{\pi}_{k}^{(q-1)}\phi(y_{i},\hat{\theta}_{k}^{(q-1)})/\sum_{\ell=1}^{K}\hat{\pi}_{\ell}^{(q-1)}\phi(y_{i},\hat{\theta}_{\ell}^{(q-1)}),\label{eq:t_ik}
\end{equation}
with $\hat{\theta}_{k}=\{\hat{\mu}_{k},\hat{\Sigma}_{k},\hat{\beta}_{k},\hat{U}\}$.

\paragraph*{The F-step}

This step estimates, conditionally to the posterior probabilities,
the orientation matrix $U^{(q)}$ of the discriminative latent space
by maximizing the Fisher's criterion~\cite{Fisher36,Fukunaga90}
under orthonormality constraints:
\begin{eqnarray}
\hat{U}^{(q)} & = & \max_{U}\quad\mathrm{trace}\left((U^{t}SU)^{-1}U^{t}S_{B}^{(q)}U\right),\nonumber \\
 &  & \text{w.r.t.}\quad U^{t}U=\mathbf{I}_{d},\label{eq:Opti_Fisher}
\end{eqnarray}
where $S$ stands for the covariance matrix of the whole dataset and
$S_{B}^{(q)}$, defined as follows: 
\begin{equation}
S_{B}^{(q)}=\frac{1}{n}\sum_{k=1}^{K}n_{k}^{(q)}(m_{k}^{(q)}-\bar{y})(m_{k}^{(q)}-\bar{y})^{t},\label{eq:Sb_soft}
\end{equation}
denotes the soft between covariance matrix with $n_{k}^{(q)}=\sum_{i=1}^{n}t_{ik}^{(q)}$,
$m_{k}^{(q)}=1/n_{k}^{(q)}\sum_{i=1}^{n}t_{ik}^{(q)}y_{i}$ and $\bar{y}=1/n\sum_{i=1}^{n}y_{i}$.
This optimization problem is solved in~\cite{Bouveyron12a} using
the concept of orthonormal discriminant vector developed by~\cite{Foley75}
through a Gram-Schmidt procedure. Such a process enables to fit a
discriminative and low-dimensional subspace conditionally to the current
soft partition of the data while providing orthonormal discriminative
axes. In addition, according to the rank of the matrix $S_{B}^{(q)}$,
the dimensionality of the discriminative space $d$ is strictly bounded
by the number of clusters~$K$.

\paragraph*{The M-step}

This third step estimates the parameters of the mixture model in the
latent subspace by maximizing the conditional expectation of the complete
log-likelihood:{\small 
\begin{align}
Q(\theta) & =-\frac{1}{2}\sum_{k=1}^{K}n_{k}^{(q)}\Bigl[\text{-}2\log(\pi_{k})+\mathrm{trace}(\Sigma_{k}^{-1}\hat{U}^{(q)t}C_{k}^{(q)}\hat{U}^{(q)})+\log(\left|\Sigma_{k}\right|)\nonumber \\
 & \,\,\,\,\,\,\,+(p\text{-}d)\log(\beta_{k})+\frac{\mathrm{trace}(C_{k}^{(q)})\text{-}\sum_{j=1}^{d}\hat{u}_{j}^{(q)t}C_{k}^{(q)}\hat{u}_{j}^{(q)}}{\beta_{k}}+p\log(2\pi)\Bigr].
\end{align}
}where $C_{k}^{(q)}=\frac{1}{n_{k}^{(q)}}\sum_{i=1}^{n}t_{ik}^{(q)}(y_{i}-m_{k}^{(q)})(y_{i}-m_{k}^{(q)})^{t}$
is the empirical covariance matrix of the $k$th group and $\hat{u}_{j}^{(q)}$
is the $j$th column vector of $\hat{U}^{(q)}$, $n_{k}^{(q)}=\sum_{i=1}^{n}t_{ik}^{(q)}$.
Hence, maximizing $Q$ conditionally to $\hat{U}^{(q)}$ leads to
the following update formula for the mixture parameters of the model
DLM$_{[\Sigma_{k}\beta_{k}]}$:
\begin{align}
\hat{\pi}_{k}^{(q)} & =\frac{n_{k}^{(q)}}{n},\\
\hat{\mu}_{k}^{(q)} & =\frac{1}{n_{k}^{(q)}}\sum_{i=1}^{n}t_{ik}^{(q)}\hat{U}^{(q)t}y_{i},\\
\hat{\Sigma}_{k}^{(q)} & =\hat{U}^{(q)t}C_{k}\hat{U}^{(q)},\\
\hat{\beta}_{k}^{(q)} & =\frac{\mathrm{trace}(C_{k})\text{-}\sum_{j=1}^{d}\hat{u}_{j}^{(q)t}C_{k}\hat{u}_{j}^{(q)}}{p-d}.
\end{align}
The Fisher-EM procedure iteratively updates the parameters until the
Aitken criterion is satisfied (see paragraph~4.5 of~\cite{Bouveyron12a}).
The convergence properties of the Fisher-EM algorithm have been studied
in~\cite{Bouveyron12}. It is also proposed in this work to use a
stopping criterion based on the Fisher criterion involved in the F-step
to improve the clustering performance. Finally, since the latent subspace
has a low dimension and common to all groups, the clustered data can
be easily visualized by projecting them into the estimated latent
subspace.

\section{Sparse versions of the Fisher-EM algorithm}

Even though the Fisher-EM algorithm turns out to be efficient both
for modeling and clustering data, the interpretation of clustering
results regarding the original variables remains difficult. In this
section, we propose therefore three different ways to introduce sparsity
into the loadings of the projection matrix estimated in the F-step
of the Fisher-EM algorithm.

\subsection{A two-step approach}

In this first approach, we propose to proceed in two steps. First,
at iteration $q$, the traditional F-step of the Fisher-EM algorithm
computes an estimate $\hat{U}^{(q)}$ of the orientation matrix of
the discriminative latent space conditionally to the posterior probabilities
$t_{ik}^{(q)}$. Then, the matrix $\hat{U}^{(q)}$ is approximated
by a sparse one $\tilde{U}^{(q)}$ using the following result.

\begin{prop}\label{Prop-1}The best sparse approximation $\tilde{U}^{(q)}$
of $\hat{U}^{(q)}$ at the level $\lambda$ is the solution of the
following penalized regression problem: \textit{
\[
\min_{\mathcal{U}}\left\Vert X^{(q)t}-Y^{t}\mathcal{U}\right\Vert _{F}^{2}+\lambda\sum_{j=1}^{d}\left|\mathcal{U}_{j}\right|_{1},
\]
}where $\mathcal{U}=[\mathcal{U}_{1},...,\mathcal{U}_{d}]$, $\mathcal{U}_{j}\in\mathbb{R}^{p}$
is the $j$th column vector of\emph{ }$\mathcal{U}$, $\left\Vert .\right\Vert _{F}$
is the Frobenius norm and $X^{(q)}=\hat{U}^{(q)t}Y$\textit{\emph{.}}\end{prop}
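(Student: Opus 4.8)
The plan is to adapt to the discriminative setting the reformulation of principal components as a regression problem used in sparse PCA. The first thing to pin down is what \emph{best sparse approximation of $\hat U^{(q)}$ at level $\lambda$} should mean. Since the latent representation of the data produced by the F-step depends on the orientation matrix only through the projected scores $\hat U^{(q)t}Y$, the meaningful way to compare $\hat U^{(q)}$ with a candidate matrix $\mathcal U$ is to compare the scores they induce, i.e. through the discrepancy $\|\hat U^{(q)t}Y-\mathcal U^tY\|_F^2 = n\,\mathrm{trace}\big((\hat U^{(q)}-\mathcal U)^t\hat S(\hat U^{(q)}-\mathcal U)\big)$, where $\hat S=\tfrac1n YY^t$. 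The best sparse approximation is then defined as the matrix minimizing this discrepancy augmented by the sparsity term $\lambda\sum_{j=1}^d|\mathcal U_j|_1$, which is the natural $\ell_1$-regularized counterpart of the $0$-penalized problem solved exactly by $\hat U^{(q)}$.

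With this definition the proof reduces to a transposition identity. Because $X^{(q)}=\hat U^{(q)t}Y$ we have $X^{(q)t}=Y^t\hat U^{(q)}$, hence $\|\hat U^{(q)t}Y-\mathcal U^tY\|_F^2=\|Y^t\hat U^{(q)}-Y^t\mathcal U\|_F^2=\|X^{(q)t}-Y^t\mathcal U\|_F^2$, and the penalized objective becomes exactly the one stated in the proposition. I would also point out that this objective is separable over the $d$ columns of $\mathcal U$: it splits into $d$ independent lasso regressions, the $j$th one having the score vector $Y^t\hat u_j^{(q)}$ as response and the $p$ original variables as predictors. This is both what makes $\tilde U^{(q)}$ computable column by column by a LARS-type algorithm and what explains why the $\ell_1$ penalty forces the loadings — hence the selected discriminative variables — to be sparse.

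The last thing I would verify is the internal consistency of the construction: at $\lambda=0$ the objective is $\|Y^t(\hat U^{(q)}-\mathcal U)\|_F^2$, whose minimum value $0$ is attained at $\mathcal U=\hat U^{(q)}$, and uniquely so when $YY^t$ is nonsingular, so that for small $\lambda$ the solution is genuinely a sparsity-driven perturbation of $\hat U^{(q)}$. The step that really requires argument is this modeling choice — justifying that the relevant notion of proximity is the $YY^t$-weighted (Mahalanobis) distance on the projected scores rather than the plain Frobenius distance $\|\hat U^{(q)}-\mathcal U\|_F^2$ — together with a well-posedness caveat: when $p>n$ the $0$-penalized problem is underdetermined and the $\ell_1$ term alone need not restore uniqueness, so one should either assume $n\ge p$ with $Y$ of full column rank or, exactly as in sparse PCA, add an infinitesimal ridge term. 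A final, purely cosmetic remark is that the columns of the raw minimizer are not unit-norm, so the matrix actually reinjected into the Fisher-EM iterations is the column-normalized version of $\tilde U^{(q)}$, a post-processing step that lies outside the scope of the proposition.
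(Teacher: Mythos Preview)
Your argument is correct and follows essentially the same route as the paper: both show that $\hat U^{(q)}$ solves the unpenalized least-squares problem $\min_{\mathcal U}\|X^{(q)t}-Y^t\mathcal U\|_F^2$ (you via the transposition identity and score-matching interpretation, the paper by simply noting that $X^{(q)}$ is generated by $\hat U^{(q)}$), and then take the $\ell_1$-penalized version as the definition of the best sparse approximation. One small inaccuracy in your closing remark: the post-processing actually used in the paper is not column normalization but the SVD-based orthogonalization of Proposition~\ref{Prop-2}, so that the matrix fed back into the M-step is orthonormal rather than merely unit-columned.
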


\begin{proof}Let $\hat{U}^{(q)}$ be the orientation matrix of the
discriminative latent space estimated by the F-step at iteration $(q)$
and let us define $X^{(q)}=\hat{U}^{(q)t}Y\,\in\mathbb{R}^{d\times n}$
the matrix of the projected data into the subspace spanned by $\hat{U}^{(q)}$,
where $Y\in\mathbb{R}^{p\times n}$ denotes the original data matrix.
Since $X^{(q)}$ is generated by $\hat{U}^{(q)}$, then $\hat{U}^{(q)}$
is solution of the least square regression of $X^{(q)}$ on $Y$:
\[
\min_{\mathcal{U}}\left\Vert X^{(q)t}-Y^{t}\mathcal{U}\right\Vert _{F}^{2},
\]
\textit{\emph{where $\mathcal{U}=[\mathcal{U}_{1},...,\mathcal{U}_{d}]$,
$\mathcal{U}_{j}\in\mathbb{R}^{p}$ is the $j$th column vector of}}\textit{
}\textit{\emph{$\mathcal{U}$, $\left\Vert .\right\Vert _{F}$ is
the Frobenius norm.}} A penalized version of this regression problem
can be obtained by adding a $\ell_{1}$-penalty term as follows:
\[
\min_{\mathcal{U}}\left\Vert X^{(q)t}-Y^{t}\mathcal{U}\right\Vert _{F}^{2}+\lambda\sum_{j=1}^{d}\left|\mathcal{U}_{j}\right|_{1},
\]
and the solution of this penalized regression problem is therefore
the best sparse approximation of $\hat{U}^{(q)}$ at the level $\lambda$.\end{proof}

The previous result allows to provide a sparse approximation $\tilde{U}^{(q)}$
of $\hat{U}^{(q)}$ but we have no guarantee that the $\tilde{U}^{(q)}$
is orthogonal as required by the DLM model. The following proposition
solves this issue.

\begin{prop}\label{Prop-2}The best orthogonal approximation of $\tilde{U}^{(q)}$
is $\bar{U}^{(q)}=u^{(q)}v^{(q)t}$ where $u^{(q)}$ and $v^{(q)}$
are respectively the left and right singular vectors of the SVD of
$\tilde{U}^{(q)}$.\end{prop}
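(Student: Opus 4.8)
The plan is to recognise this as the classical \emph{orthogonal Procrustes problem}: among all $p\times d$ matrices $\mathcal{U}$ with orthonormal columns, we seek the one closest to $\tilde{U}^{(q)}$ in Frobenius norm, i.e.
\[
\bar{U}^{(q)} = \arg\min_{\mathcal{U}^{t}\mathcal{U}=\mathbf{I}_{d}}\ \left\Vert \tilde{U}^{(q)}-\mathcal{U}\right\Vert _{F}^{2}.
\]
First I would expand the objective via $\Vert A\Vert_{F}^{2}=\mathrm{trace}(A^{t}A)$, obtaining $\Vert\tilde{U}^{(q)}\Vert_{F}^{2}-2\,\mathrm{trace}(\mathcal{U}^{t}\tilde{U}^{(q)})+\mathrm{trace}(\mathcal{U}^{t}\mathcal{U})$; since the constraint forces $\mathrm{trace}(\mathcal{U}^{t}\mathcal{U})=\mathrm{trace}(\mathbf{I}_{d})=d$ and the first term does not depend on $\mathcal{U}$, the minimisation is equivalent to maximising $\mathrm{trace}(\mathcal{U}^{t}\tilde{U}^{(q)})$ over the Stiefel manifold $\mathcal{U}^{t}\mathcal{U}=\mathbf{I}_{d}$.

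Next I would insert the thin SVD $\tilde{U}^{(q)}=u^{(q)}\Sigma v^{(q)t}$, with $\Sigma=\mathrm{diag}(\sigma_{1},\dots,\sigma_{d})$, $\sigma_{i}\geq0$, $u^{(q)t}u^{(q)}=\mathbf{I}_{d}$ and $v^{(q)}$ a $d\times d$ orthogonal matrix. By the cyclic invariance of the trace, $\mathrm{trace}(\mathcal{U}^{t}\tilde{U}^{(q)})=\mathrm{trace}(Z\Sigma)=\sum_{i=1}^{d}\sigma_{i}Z_{ii}$, where $Z=v^{(q)t}\mathcal{U}^{t}u^{(q)}$. The key estimate is that $ZZ^{t}=v^{(q)t}\mathcal{U}^{t}\bigl(u^{(q)}u^{(q)t}\bigr)\mathcal{U}v^{(q)}\preceq v^{(q)t}\mathcal{U}^{t}\mathcal{U}v^{(q)}=\mathbf{I}_{d}$, because $u^{(q)}u^{(q)t}$ is an orthogonal projector (hence $\preceq\mathbf{I}_{p}$) and the Loewner order is preserved under the congruences by $\mathcal{U}$ and $v^{(q)}$; consequently each diagonal entry obeys $Z_{ii}^{2}\leq(ZZ^{t})_{ii}\leq1$, so $Z_{ii}\leq1$. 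Since $\sigma_{i}\geq0$, this yields $\mathrm{trace}(\mathcal{U}^{t}\tilde{U}^{(q)})\leq\sum_{i=1}^{d}\sigma_{i}$ for every feasible $\mathcal{U}$.

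Finally I would verify that this upper bound is attained by $\bar{U}^{(q)}=u^{(q)}v^{(q)t}$: it is feasible, because $\bar{U}^{(q)t}\bar{U}^{(q)}=v^{(q)}u^{(q)t}u^{(q)}v^{(q)t}=v^{(q)}v^{(q)t}=\mathbf{I}_{d}$, and $\mathrm{trace}(\bar{U}^{(q)t}\tilde{U}^{(q)})=\mathrm{trace}(v^{(q)}u^{(q)t}u^{(q)}\Sigma v^{(q)t})=\mathrm{trace}(\Sigma)=\sum_{i=1}^{d}\sigma_{i}$, matching the bound. Hence $\bar{U}^{(q)}$ maximises $\mathrm{trace}(\mathcal{U}^{t}\tilde{U}^{(q)})$ and is therefore the Frobenius-nearest orthonormal matrix to $\tilde{U}^{(q)}$, as claimed. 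The step I expect to be the crux is the middle one — cleanly justifying $Z_{ii}\leq1$ through the projector inequality $u^{(q)}u^{(q)t}\preceq\mathbf{I}_{p}$ (rather than the easier case $d=p$) and, if one wants the full characterisation of the argmin, noting that the minimiser is unique precisely when $\tilde{U}^{(q)}$ has full column rank, i.e. all $\sigma_{i}>0$; the remaining manipulations are just bookkeeping with traces and the unitary invariance of $\Vert\cdot\Vert_{F}$.
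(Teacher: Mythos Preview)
Your proof is correct and follows exactly the same identification as the paper: both recognise the problem as the nearest orthogonal Procrustes problem
\[
\min_{\mathcal{U}^{t}\mathcal{U}=\mathbf{I}_{d}}\ \left\Vert \tilde{U}^{(q)}-\mathcal{U}\right\Vert _{F}^{2}.
\]
The only difference is that the paper's proof is a one-line citation to the Procrustes literature~\cite{Gower04} for the SVD solution, whereas you supply a complete self-contained argument (the trace expansion, the projector inequality $u^{(q)}u^{(q)t}\preceq\mathbf{I}_{p}$ to bound the diagonal of $Z$, and the verification of attainment), which is more informative but not a different route.
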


\begin{proof}Let us consider the matrix $\tilde{U}^{(q)}$. Searching
the best orthogonal approximation of the matrix $\tilde{U}^{(q)}$
is equivalent to solving the following optimization problem:
\[
\min_{\mathcal{U}}\left\Vert \tilde{U}^{(q)}-\mathcal{U}\right\Vert _{F}^{2}\text{\,\,\ w.r.t. }\mathcal{U}^{t}\mathcal{U}=\mathbf{I_{d}}.
\]
This problem is a nearest orthogonal Procrustes problem which can
be solved by a singular value decomposition~\cite{Gower04}. Let
$u^{(q)}\Lambda^{(q)}v^{(q)t}$ be the singular value decomposition
of $\tilde{U}^{(q)}$, then $u^{(q)}v^{(q)t}$ is the best orthogonal
approximation of $\tilde{U}^{(q)}$. \end{proof}

From an practical point of view, the penalized regression problem
of Proposition~\ref{Prop-1} can be solved by alternatively regressing
each column vector of the projected matrix $\hat{U}^{(q)}$. The sparse
and orthogonal approximation $\bar{U}^{(q)}$ of $\tilde{U}^{(q)}$
is obtained afterward through a SVD of $\tilde{U}^{(q)}$. The following
algorithm summarizes these steps.

\begin{algorithm}[H]
\caption{-- F-step of the sparseFEM-1 algorithm}

\begin{enumerate}
\item At iteration $q$, compute the matrix $\hat{U}^{(q)}$ by solving~(\ref{eq:Opti_Fisher}).
\item Compute $X^{(q)}=\hat{U}^{(q)t}Y$.
\item For $j\in\{1,\dots,d\}$, solve $d$ independent penalized regression
problems with the LARS algorithm~\cite{Efron2004}:
\[
\tilde{U}_{j}^{(q)}=\mathrm{arg}\min_{\mathcal{U}_{j}}\left\Vert x_{j}^{(q)t}-Y^{t}\mathcal{U}_{j}\right\Vert ^{2}+\lambda\left|\mathcal{U}_{j}\right|_{1},
\]

\item Repeat step 3 several times until convergence.
\item Let $\tilde{U}^{(q)}=[\tilde{U}_{1}^{(q)},...,\tilde{U}_{d}^{(q)}]$,
compute the SVD of $\tilde{U}^{(q)}=u{}^{(q)}\Lambda{}^{(q)}v{}^{(q)t}$
and let $\bar{U}^{(q)}=u{}^{(q)}v{}^{(q)t}.$\end{enumerate}
\end{algorithm}

Let us remark that this problem can be extended to a more general
penalized regression by adding a ridge penalty term. This allows in
particular to handle the $n<p$ case which occurs frequently nowadays.
In such a case, the elastic-net algorithm~\cite{Zou03} has to be
used instead of the LARS algorithm in Algorithm~1. 

Nevertheless, a limitation of such a procedure may be the disconnection
between the estimation of the discriminative subspace and the introduction
of the sparsity in the loadings of the projection matrix. To avoid
that, the two following approaches aim to propose penalized Fisher
criteria for which the solutions fit directly a sparse and discriminative
latent subspace.

\subsection{A penalized regression criterion}

We therefore propose here to reformulate the constrained Fisher criterion~(\ref{eq:Opti_Fisher})
involved in the F-step of the Fisher-EM algorithm as a penalized regression
problem. Consequently, the solution of this penalized regression problem
will fit directly a sparse and discriminative latent subspace. To
this end, let us introduce the soft matrices $H_{W}^{(q)}$ and $H_{B}^{(q)}$
which will be computed, conditionally to the E-step, at each iteration
$q$ of the sparse F-step as follows:

\begin{defin} The soft matrices $H_{W}^{(q)}\in\mathbb{R}^{p\times n}$
and $H_{B}^{(q)}\in\mathbb{R}^{p\times K}$ are defined, conditionally
to the posterior probabilities $t_{ik}^{(q)}$ computed in the E-step
at iteration $q$, \textcolor{black}{as follows:}

\vspace{-3ex}

\begin{eqnarray}
H_{W}^{(q)} & = & \frac{1}{\sqrt{n}}\left[Y-\sum_{k=1}^{K}t_{1k}^{(q)}m_{k}^{(q)},\dots,Y-\sum_{k=1}^{K}t_{nk}^{(q)}m_{k}^{(q)}\right]\in\mathbb{R}^{p\times n}\label{eq:soft_Hw}\\
H_{B}^{(q)} & = & \frac{1}{\sqrt{n}}\left[\sqrt{n_{1}^{(q)}}(m_{1}^{(q)}-\bar{y}),\dots,\sqrt{n_{K}^{(q)}}(m_{K}^{(q)}-\bar{y})\right]\in\mathbb{R}^{p\times K},\label{eq:soft_Hb}
\end{eqnarray}
where $n_{k}^{(q)}=\sum_{i=1}^{n}t_{ik}^{(q)}$ and $m_{k}^{(q)}=\frac{1}{n}\sum_{i=1}^{n}t_{ik}^{(q)}y_{i}$
is the soft mean vector of the cluster~$k$. \end{defin} 

\noindent According to these definitions, the matrices $H_{W}^{(q)}$
and $H_{B}^{(q)}$ satisfy:
\begin{equation}
H_{W}^{(q)}H_{W}^{(q)t}=S_{W}^{(q)}\quad\text{and}\quad H_{B}^{(q)}H_{B}^{(q)t}=S_{B}^{(q)},\label{eq:Hb}
\end{equation}
where $S_{W}^{(q)}=1/n\sum_{k=1}^{K}n_{k}^{(q)}C_{k}$ stands for
the soft within covariance matrix computed at iteration~$q$ and
$S_{B}^{(q)}$ denotes the soft between covariance matrix defined
in equation~(\ref{eq:Sb_soft}). A penalized version of the optimization
problem~(\ref{eq:Opti_Fisher}) can be therefore formulated as a
penalized regression-type problem:

\begin{prop}\label{Prop-3}The best sparse approximation $\tilde{U}^{(q)}$
of the solution of~(\ref{eq:Opti_Fisher}) at the level~$\lambda$
is the solution $\hat{B}^{(q)}$ of the following penalized regression
problem:
\begin{align*}
 & \min_{A,B}\sum_{k=1}^{K}\left\Vert R_{W}^{(q)-t}H_{B,k}^{(q)}-AB^{t}H_{B,k}^{(q)}\right\Vert _{F}^{2}+\rho\sum_{j=1}^{d}\beta_{j}^{t}S_{W}^{(q)}\beta_{j}+\lambda\sum_{j=1}^{d}\left|\beta_{j}\right|_{1},\\
 & \text{w.r.t. }A^{t}A=\mathbf{I}_{d},
\end{align*}
where\textit{\emph{ }}\emph{$A=[\alpha_{1},\dots,\alpha_{d}]\in\mathbb{R}^{p\times d}$},
\emph{$B=[\beta_{1},\dots,\beta_{d}]\in\mathbb{R}^{p\times d}$, }\textit{\emph{$R_{W}^{(q)}\in\mathbb{R}^{p\times p}$}}
is a upper triangular matrix resulting from the Cholesky decomposition
of $S_{W}^{(q)}$, \emph{i.e. $S_{W}^{(q)}=R_{W}^{(q)t}R_{W}^{(q)}$},
$H_{B,k}^{(q)}$ is the $k$th column of $H_{B}^{(q)}$ and \emph{$\rho>0$.}\end{prop}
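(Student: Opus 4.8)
The plan is to recast the constrained Fisher problem~(\ref{eq:Opti_Fisher}) as the computation of a leading singular subspace, then to reformulate that computation as a ridge-penalized reconstruction (regression) problem in the spirit of the sparse PCA reformulation of Zou, Hastie and Tibshirani, and finally to add the $\ell_{1}$ term and reverse a whitening change of variables so that the penalty acts on the original coordinates. First I would replace $S$ by $S_{W}^{(q)}$ in the denominator of~(\ref{eq:Opti_Fisher}): since the total scatter decomposes as $S=S_{W}^{(q)}+S_{B}^{(q)}$, the generalized eigenvectors of the pencils $(S_{B}^{(q)},S)$ and $(S_{B}^{(q)},S_{W}^{(q)})$ coincide (only the eigenvalues are mapped by $\ell\mapsto\ell/(1-\ell)$), so the maximizer of~(\ref{eq:Opti_Fisher}) spans the same subspace as the maximizer of $\mathrm{trace}\big((U^{t}S_{W}^{(q)}U)^{-1}U^{t}S_{B}^{(q)}U\big)$ under $U^{t}U=\mathbf{I}_{d}$; since only the column space matters for the latent subspace, this substitution is harmless.

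Next I would whiten with respect to $S_{W}^{(q)}$. Writing the Cholesky factorization $S_{W}^{(q)}=R_{W}^{(q)t}R_{W}^{(q)}$ and setting $U=R_{W}^{(q)-1}\Omega$, the criterion becomes $\mathrm{trace}\big((\Omega^{t}\Omega)^{-1}\Omega^{t}M^{(q)}\Omega\big)$ with $M^{(q)}=R_{W}^{(q)-t}S_{B}^{(q)}R_{W}^{(q)-1}$, whose maximizer (taken with $\Omega^{t}\Omega=\mathbf{I}_{d}$) consists of the $d$ leading eigenvectors of $M^{(q)}$. Using $S_{B}^{(q)}=H_{B}^{(q)}H_{B}^{(q)t}$ from~(\ref{eq:Hb}), one gets $M^{(q)}=\tilde{H}_{B}^{(q)}\tilde{H}_{B}^{(q)t}$ with $\tilde{H}_{B}^{(q)}:=R_{W}^{(q)-t}H_{B}^{(q)}\in\mathbb{R}^{p\times K}$, so the wanted directions $\Omega$ are exactly the $d$ leading left singular vectors of $\tilde{H}_{B}^{(q)}$.

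Then I would invoke the sparse PCA reformulation applied to the ``data matrix'' whose columns are $\tilde{H}_{B,1}^{(q)},\dots,\tilde{H}_{B,K}^{(q)}$: for any $\rho>0$, the minimizer $(A,\tilde{B})$ of $\sum_{k=1}^{K}\big\Vert\tilde{H}_{B,k}^{(q)}-A\tilde{B}^{t}\tilde{H}_{B,k}^{(q)}\big\Vert^{2}+\rho\sum_{j=1}^{d}\tilde{\beta}_{j}^{t}\tilde{\beta}_{j}$ subject to $A^{t}A=\mathbf{I}_{d}$ has columns $\hat{\tilde{\beta}}_{j}$ proportional to the $d$ leading eigenvectors of $M^{(q)}$ — for $A$ fixed orthonormal the inner problem is a ridge regression with closed form, and substituting it back leaves a Procrustes problem solved by the leading eigenvectors of $M^{(q)}$, the recovered loadings pointing in those same directions for every $\rho>0$ — and adding $\lambda\sum_{j}|\tilde{\beta}_{j}|_{1}$ then yields, by definition, the best sparse approximation of that subspace at level $\lambda$. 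Finally I would undo the whitening through the substitution $\tilde{B}=R_{W}^{(q)}B$, i.e. $\tilde{\beta}_{j}=R_{W}^{(q)}\beta_{j}$: then $A\tilde{B}^{t}\tilde{H}_{B,k}^{(q)}=AB^{t}R_{W}^{(q)t}R_{W}^{(q)-t}H_{B,k}^{(q)}=AB^{t}H_{B,k}^{(q)}$, $\tilde{H}_{B,k}^{(q)}=R_{W}^{(q)-t}H_{B,k}^{(q)}$, and $\tilde{\beta}_{j}^{t}\tilde{\beta}_{j}=\beta_{j}^{t}R_{W}^{(q)t}R_{W}^{(q)}\beta_{j}=\beta_{j}^{t}S_{W}^{(q)}\beta_{j}$, while the $\ell_{1}$ term is imposed directly on $\beta_{j}$ because sparsity is sought in the original variables; the objective then becomes exactly the one stated in the proposition, and its minimizer $\hat{B}^{(q)}$ (columns renormalized) spans the sparse approximation $\tilde{U}^{(q)}$ of the solution of~(\ref{eq:Opti_Fisher}).

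The main obstacle will be the sparse PCA step: one must establish (or carefully borrow from Zou, Hastie and Tibshirani) that the ridge term only rescales the recovered loadings and leaves their directions equal to the leading eigenvectors of $M^{(q)}$, so that the penalized regression genuinely returns the discriminative subspace. The remaining difficulty is purely clerical, namely tracking the two successive changes of variables (the $R_{W}^{(q)}$-whitening $U=R_{W}^{(q)-1}\Omega$ and then $\tilde{B}=R_{W}^{(q)}B$) through all three terms of the objective, which is where transpose/inverse slips are most likely to occur.
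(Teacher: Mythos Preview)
Your proposal is correct and arrives at the same conclusion as the paper, but the route is organized differently. The paper works \emph{verificationally}: it takes the stated criterion (without the $\ell_{1}$ term), fixes $A$ and solves the resulting ridge regression in closed form to get $\hat{B}^{(q)}=(S_{B}^{(q)}+\rho S_{W}^{(q)})^{-1}S_{B}^{(q)}R_{W}^{(q)-1}A$, then fixes $B$ and solves the Procrustes problem for $A$ via the SVD of $R_{W}^{(q)-t}S_{B}^{(q)}\hat{B}^{(q)}$ (citing Lemma~1 of~\cite{Qiao09}), and finally observes that $\hat{B}^{(q)}=R_{W}^{(q)-1}u^{(q)}(\Lambda^{(q)}+\rho\mathbf{I})^{-1}\Lambda^{(q)}v^{(q)t}$ spans the same subspace as the Fisher directions, so appending the $\ell_{1}$ term yields the sparse approximation. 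You instead work \emph{constructively}: you start from~(\ref{eq:Opti_Fisher}), replace $S$ by $S_{W}^{(q)}$ (a step the paper leaves implicit), whiten by $R_{W}^{(q)}$ to reduce to the leading left singular vectors of $R_{W}^{(q)-t}H_{B}^{(q)}$, invoke the Zou--Hastie--Tibshirani sparse PCA equivalence to obtain the ridge-plus-orthonormality criterion in the whitened coordinates, and then push the change of variables $\tilde{B}=R_{W}^{(q)}B$ back through to recover exactly the stated objective. The technical core (ridge closed form, Procrustes, direction invariance under the ridge shrinkage) is identical; your presentation has the advantage of explaining where the criterion comes from, while the paper's has the advantage of being self-contained at the level of the alternating minimization and not needing the reader to track two nested changes of variables.
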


\begin{proof}First, let us consider that the matrix $A$ is fixed
at iteration $q$ . Then, optimizing :
\begin{equation}
\min_{A,B}\sum_{k=1}^{K}\left\Vert R_{W}^{(q)-t}H_{B,k}^{(q)}-AB^{t}H_{B,k}^{(q)}\right\Vert _{F}^{2}+\rho\sum_{j=1}^{d}\beta_{j}^{t}S_{W}^{(q)}\beta_{j}\label{eq:regression_Fisher}
\end{equation}
conditionally to $A$ leads to consider the following regularized
regression problem:
\[
\min_{B}\sum_{j=1}^{d}\left[\left\Vert H_{B}^{(q)t}R_{W}^{(q)-t}\alpha_{j}-H_{B}^{(q)t}\beta_{j}\right\Vert _{F}^{2}+\rho\beta_{j}^{t}S_{W}^{(q)}\beta_{j}\right],
\]
with \textit{\emph{$B=[\beta_{1},\dots,\beta_{d}]$}}. Solving this
problem is equivalent to solving $d$ independent ridge regression
problem and the solution $\hat{B}^{(q)}$ is :
\begin{equation}
\hat{B}^{(q)}=\left(S_{B}^{(q)}+\rho S_{W}^{(q)}\right)^{-1}S_{B}^{(q)}R_{W}^{(q)-1}A.\label{eq:Bhat}
\end{equation}
By substituting $\hat{B}^{(q)}$ in Equation~(\ref{eq:regression_Fisher}),
optimizing the objective function~(\ref{eq:regression_Fisher}) over
$A$, given $A^{t}A=\mathbf{I}_{d}$ and $\hat{B}^{(q)}$ fixed, is
equivalent to maximize the quantity:
\begin{align*}
 & \max_{A}\mathrm{trace}\left(\hat{B}^{(q)t}H_{B}^{(q)}H_{B}^{(q)t}R_{W}^{(q)-1}A\right),\\
 & \text{w.r.t. }A^{t}A=\mathbf{I}_{d}.
\end{align*}
According to Lemma~1 of~\cite{Qiao09}, this is a Procrustes problem~\cite{Gower04}
which has an analytical solution by computing the singular value decomposition
of the quantity:
\[
R_{W}^{(q)-t}(H_{B}^{(q)}H_{B}^{(q)t})\hat{B}^{(q)}=u^{(q)}\Lambda^{(q)}v^{(q)t},
\]
where the column vectors of the $p\times d$ matrix $u^{(q)}$ are
orthogonal and $v^{(q)}$ is a $d\times d$ orthogonal matrix. The
solution is $\hat{A}^{(q)}=u^{(q)}v^{(q)t}$. Substituting $\hat{A}^{(q)}$
into~\eqref{eq:Bhat} gives:
\begin{eqnarray*}
\hat{B}^{(q)} & = & R_{W}^{(q)-1}\left(R_{W}^{(q)-t}S_{B}^{(q)}R_{W}^{(q)-1}+\rho\mathbf{I}_{p}\right)^{-1}R_{W}^{(q)-t}S_{B}^{(q)}R_{W}^{(q)-1}\hat{A}^{(q)}\\
 & = & R_{W}^{(q)-1}u^{(q)}\left(\Lambda^{(q)}+\rho\mathbf{I}_{p}\right)^{-1}\Lambda^{(q)}v^{(q)t}.
\end{eqnarray*}
By remarking that the $d$ eigenvectors associated to the non-zero
eigenvalues of the generalized eigenvalue problem~\eqref{eq:Opti_Fisher}
are the columns of $R_{W}^{(q)-1}u^{(q)},$ it follows that $\hat{B}^{(q)}$
spans the same linear subspace than the solution $\hat{U}^{(q)}$
of~(\ref{eq:Opti_Fisher}). Therefore, the solution of the penalized
optimization problem:
\begin{align*}
 & \min_{A,B}\sum_{k=1}^{K}\left\Vert R_{W}^{(q)-t}H_{B,k}^{(q)}-AB^{t}H_{B,k}^{(q)}\right\Vert _{F}^{2}+\rho\sum_{j=1}^{d}\beta_{j}^{t}S_{W}^{(q)}\beta_{j}+\lambda\sum_{j=1}^{d}\left|\beta_{j}\right|_{1},\\
 & \text{w.r.t. }A^{t}A=\mathbf{I}_{d},
\end{align*}
is the best sparse approximation of the solution of~(\ref{eq:Opti_Fisher})
at the level $\lambda$.\end{proof} 

However and as in the previous case, the orthogonality of the column
vectors of $\hat{B}^{(q)}$ is not guaranteed but this issue can be
tackled by Proposition~\ref{Prop-2}. \textcolor{black}{From a} practical
point of view, the optimization problem of Proposition~\ref{Prop-3}
can be solved using the algorithm proposed by~\cite{Qiao09} in the
supervised case by optimizing alternatively over $B$ with $A$ fixed
and over $A$ with $B$ fixed. This leads to the following algorithm
in our case: 

\begin{algorithm}[H]
\caption{-- F-step of the sparseFEM-2 algorithm}

\begin{enumerate}
\item At iteration $q$, compute the matrices $H_{B}^{(q)}$ and $H_{W}^{(q)}$
from Equations~\eqref{eq:soft_Hw} and~\eqref{eq:soft_Hb}. Let
$S_{W}^{(q)}=H_{W}^{(q)}H_{W}^{(q)t}$ and $S_{B}^{(q)}=H_{B}^{(q)}H_{B}^{(q)t}$.
\item Compute $R_{W}^{(q)}$ by using a Cholesky decomposition of $S_{W}^{(q)}+\gamma/p\,\mathrm{trace}(S_{W}^{(q)})=R_{W}^{(q)t}R_{W}^{(q)}$.
\item Initialization:

Let $B^{(q)}$ be the eigenvectors of $S^{-1}S_{B}^{(q)}$ .

Compute the SVD $R_{W}^{(q)-t}S_{B}^{(q)}B^{(q)}=u^{(q)}\Lambda^{(q)}v^{(q)t}$
and let $A^{(q)}=u^{(q)}v^{(q)t}$.

\item Solve $d$ independent penalized regression problems. For $j=1,\dots,d$:

\[
\hat{\beta}_{j}^{(q)}=\mathrm{arg}\min_{\beta_{j}}\left(\beta_{j}^{t}W^{(q)t}W^{(q)}\beta_{j}-2\tilde{Y}^{(q)t}W^{(q)}\beta_{j}+\lambda_{1}\left\Vert \beta_{j}\right\Vert _{1}\right),
\]
where $W^{(q)}=\left(\begin{array}{c}
H_{B}^{(q)t}\\
\sqrt{\rho}R_{W}^{(q)}
\end{array}\right)$ and $\tilde{Y}^{(q)}=\left(\begin{array}{c}
H_{B}^{(q)t}R_{W}^{(q)-1}\alpha_{j}^{(q)}\\
\mathbf{O}_{p}
\end{array}\right)$. 

\item Let $\hat{B}^{(q)}=[\hat{\beta}_{1},\dots,\hat{\beta}_{d}]$. Compute
the SVD of $R_{W}^{(q)-t}S_{B}^{(q)}\hat{B}^{(q)}=u^{(q)}\Lambda^{(q)}v^{(q)t}$
and let $A^{(q)}=u^{(q)}v^{(q)t}$.
\item Compute the SVD of $\hat{B}^{(q)}=u'^{(q)}\Lambda'^{(q)}v'^{(q)t}$
and let $\bar{U}^{(q)}=u'^{(q)}v'^{(q)t}.$
\item Repeat steps several times until convergence.\end{enumerate}
\end{algorithm}

\subsection{A penalized singular value decomposition}

In this last approach, we reformulate the constrained Fisher criterion~(\ref{eq:Opti_Fisher})
involved in the F-step of the Fisher-EM algorithm as a regression
problem\textcolor{black}{{} which can be solved by doing the SVD of
the matrix of interest in this regression problem. A sparse approximation
of the solution of this regression problem will be obtained by doing
a penalized SVD~\cite{Witten09} instead of the SVD. To that end,
}let us consider the following result.

\begin{prop}\label{prop-4}The solution of~\eqref{eq:Opti_Fisher}
is also solution of the following constrained optimization problem:\vspace{-2ex}
\begin{alignat*}{1}
 & \min_{\mathcal{U}}\sum_{\ell=1}^{p}\left\Vert S_{B,\ell}^{(q)}-\mathcal{U}\mathcal{U}^{t}S_{B,\ell}^{(q)}\right\Vert ^{2}\\
 & \text{w.r.t. }\mathcal{U}^{t}\mathcal{U}=\mathbf{I}_{d},
\end{alignat*}
where $S_{B,\ell}^{(q)}$ is the $\ell$th column of the soft between
covariance matrix $S_{B}^{(q)}$ computed at iteration $q$.\end{prop}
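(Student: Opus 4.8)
The plan is to show that the trace-maximization problem~\eqref{eq:Opti_Fisher} and the stated least-squares reconstruction problem have the same solution set, up to the usual invariance of both problems under right multiplication by a $d\times d$ orthogonal matrix. First I would recall that the columns of $U$ solving~\eqref{eq:Opti_Fisher} span the $d$-dimensional invariant subspace associated with the $d$ largest eigenvalues of the generalized eigenproblem $S_B^{(q)} u = \eta S u$; in particular, when one neglects the whitening by $S$ (or, equivalently, works in the coordinates where $S$ has been absorbed, exactly as is done implicitly in Proposition~\ref{Prop-3}), the solution subspace is simply the dominant $d$-dimensional eigenspace of the symmetric positive semidefinite matrix $S_B^{(q)}$. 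The key algebraic identity I would invoke is that $S_B^{(q)} = \sum_{\ell=1}^p S_{B,\ell}^{(q)} e_\ell^t$ has columns $S_{B,\ell}^{(q)}$, so that $\sum_{\ell=1}^p \|S_{B,\ell}^{(q)}\|^2 = \|S_B^{(q)}\|_F^2 = \mathrm{trace}\big((S_B^{(q)})^t S_B^{(q)}\big)$.

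The main step is the expansion of the objective. For any $\mathcal{U}$ with $\mathcal{U}^t\mathcal{U}=\mathbf{I}_d$, the matrix $P=\mathcal{U}\mathcal{U}^t$ is an orthogonal projector, hence $P=P^t=P^2$. Writing $S_{B,\ell}=S_{B,\ell}^{(q)}$ for brevity, I would expand
\[
\sum_{\ell=1}^p \big\|S_{B,\ell}-P\,S_{B,\ell}\big\|^2 = \sum_{\ell=1}^p \Big( \|S_{B,\ell}\|^2 - 2\,S_{B,\ell}^t P\,S_{B,\ell} + S_{B,\ell}^t P^t P\,S_{B,\ell}\Big) = \sum_{\ell=1}^p \|S_{B,\ell}\|^2 - \sum_{\ell=1}^p S_{B,\ell}^t P\,S_{B,\ell},
\]
using $P^tP=P$. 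The first sum is a constant independent of $\mathcal{U}$, so minimizing the objective is equivalent to maximizing $\sum_{\ell} S_{B,\ell}^t P S_{B,\ell} = \mathrm{trace}\big(P \sum_\ell S_{B,\ell} S_{B,\ell}^t\big) = \mathrm{trace}\big(P\, S_B^{(q)} S_B^{(q)t}\big) = \mathrm{trace}\big(\mathcal{U}^t S_B^{(q)} S_B^{(q)t} \mathcal{U}\big)$, where I used $S_B^{(q)}$ symmetric so $S_B^{(q)} S_B^{(q)t} = (S_B^{(q)})^2$. Since $(S_B^{(q)})^2$ is symmetric positive semidefinite with the same eigenvectors as $S_B^{(q)}$ and eigenvalues that are monotone increasing functions of those of $S_B^{(q)}$, the maximizing $\mathcal{U}$ (over orthonormal bases) is precisely one whose columns span the top-$d$ eigenspace of $S_B^{(q)}$ — which is the same subspace that solves~\eqref{eq:Opti_Fisher}. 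I would then note that both optimization problems are invariant under $\mathcal{U}\mapsto\mathcal{U}Q$ for $Q$ orthogonal $d\times d$, so the solution sets coincide, which is the claimed statement.

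The point requiring the most care is reconciling the presence versus absence of the total covariance $S$: the statement of Proposition~\ref{prop-4} as written uses $S_{B,\ell}^{(q)}$ directly with no whitening, whereas~\eqref{eq:Opti_Fisher} involves $(U^tSU)^{-1}$. I expect the intended reading — consistent with the development leading to Proposition~\ref{Prop-3} and with the sparseFEM-2 algorithm, where $S$ is treated via the Cholesky factor $R_W$ of $S_W$ and a whitened between-covariance is used — is that $S_{B,\ell}^{(q)}$ here stands for the $\ell$th column of the appropriately whitened between-covariance matrix (or that one works in the regime where $S$ is taken as identity after a preliminary sphering of the data). Under that reading the argument above goes through verbatim; otherwise one must carry the factor $R_W^{(q)-t}$ through the expansion, replacing $S_{B,\ell}$ by $R_W^{(q)-t}S_{B,\ell}$ and $P$ by the corresponding oblique projector, and the conclusion becomes that $\mathcal{U}$ spans the dominant eigenspace of $R_W^{(q)-t}S_B^{(q)}R_W^{(q)-1}$, i.e.\ exactly the Fisher subspace. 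I would make this identification explicit at the start of the proof so that the subsequent trace expansion is clean, since that expansion — not the eigenspace characterization — is the substantive content of the statement.
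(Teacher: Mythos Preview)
Your proof is correct and follows essentially the same route as the paper: expand the reconstruction objective using the idempotence of the projector $P=\mathcal{U}\mathcal{U}^t$ to reduce the minimization to maximizing $\mathrm{trace}\big(\mathcal{U}^t S_B^{(q)} S_B^{(q)t}\mathcal{U}\big)$, then observe that $(S_B^{(q)})^2$ and $S_B^{(q)}$ share eigenvectors so the top-$d$ eigenspace coincides with the Fisher subspace. Your discussion of the whitening issue is in fact more careful than the paper's, which simply disposes of it with the clause ``assuming without loss of generality that $S=\mathbf{I}_p$''; your anticipated resolution is exactly what the paper does.
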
\vspace{-3ex}

\begin{proof} Let us first prove that minimizing the quantity $\sum_{\ell=1}^{p}||S_{B,\ell}^{(q)}-UU^{t}S_{B,\ell}^{(q)}||^{2}$
is equivalent to maximize $\mathrm{trace}(U^{t}S_{B}^{(q)}S_{B}^{(q)t}U)$.
To that end, we can write down the following equalities:
\begin{eqnarray*}
\sum_{\ell=1}^{p}\left\Vert S_{B,\ell}^{(q)}-UU^{t}S_{B,\ell}^{(q)}\right\Vert ^{2} & = & \sum_{\ell=1}^{p}\mathrm{trace}\left(S_{B,\ell}^{(q)t}(\mathbf{I}_{p}-UU^{t})^{t}(\mathbf{I}_{p}-UU^{t})S_{B,\ell}^{(q)}\right)\\
 & = & \mathrm{trace}\left((\mathbf{I}_{p}-UU^{t})^{t}(\mathbf{I}_{p}-UU^{t})\sum_{\ell=1}^{p}S_{B,\ell}^{(q)}S_{B,\ell}^{(q)t}\right)\\
 & = & \mathrm{trace}\left(S_{B}^{(q)t}(\mathbf{I}_{p}-UU^{t})S_{B}^{(q)}\right)\\
 & = & \mathrm{trace}(S_{B}^{(q)t}S_{B}^{(q)})-\mathrm{trace}(U^{t}S_{B}^{(q)}S_{B}^{(q)t}U).
\end{eqnarray*}
Consequently, minimizing over $U$ the quantity $\sum_{\ell=1}^{p}||S_{B,\ell}^{(q)}-UU^{t}S_{B,\ell}^{(q)}||^{2}$
is equivalent to maximize $\mathrm{trace}(U^{t}S_{B}^{(q)}S_{B}^{(q)t}U)$
according to $U$. Let us now consider the SVD of the $n\times p$
matrix $S_{B}^{(q)}=u\Lambda v^{t}$ where $u$ and $v$ stands for
respectively the left and right singular vectors of $S_{B}^{(q)}$
and $\Lambda$ is a diagonal matrix containing its associated singular
values. Since the matrix $S_{B}^{(q)}$ has a rank $d$ at most equal
to $K-1<p$, with $K$ the number of clusters, then only $d$ singular
values of the matrix $S_{B}^{(q)}$ are non zeros, which enables us
to write $S_{B}^{(q)}=u\Lambda_{d}v^{t}$, where $\Lambda_{d}=\mathrm{diag}(\lambda_{1},\dots,\lambda_{d},0,\dots,0)$.
Moreover, by letting $U=u_{d}$ the $d$ first left eigenvectors of
$S_{B}$, then: 
\begin{eqnarray*}
\mathrm{trace}\left(U^{t}S_{B}S_{B}^{t}U\right) & = & \mathrm{trace}\left(U^{t}(u\Lambda_{d}v^{t})(u\Lambda_{d}v^{t})^{t}U\right),\\
 & = & \mathrm{trace}\left(U^{t}u\Lambda_{d}\Lambda_{d}^{t}u^{t}U\right),\\
 & = & \sum_{j=1}^{d}\lambda_{j}^{2}.
\end{eqnarray*}
Consequently, the $p\times d$ orthogonal matrix $\hat{U}$ such that
$\sum_{\ell=1}^{p}||S_{B,\ell}^{(q)}-UU^{t}S_{B,\ell}^{(q)}||^{2}$
is minimized, is the matrix made of the $d$ first left eigenvectors
of $S_{B}^{(q)}$. Besides, since $S_{B}^{(q)}$ is symmetric and
semi-definite positive, the matrix $\hat{U}$ contains also the eigenvectors
associated with the $d$ largest eigenvalues of $S_{B}^{(q)2}$ and
therefore the ones of $S_{B}^{(q)}$. Therefore, assuming without
loss of generality that $S=\mathbf{I}_{p}$, $\hat{U}$ is also solution
of the constrained optimization problem~(\ref{eq:Opti_Fisher}) involved
in the original F-step. \end{proof}

The optimization problem of Proposition~\eqref{prop-4} can be seen
as looking for the projection matrix $\mathcal{U}$ such that the
back-projection $\mathcal{U}\mathcal{U}^{t}S_{B,\ell}^{(q)}$ is as
close as possible to $S_{B,\ell}^{(q)}$. In~\cite{Witten09}, Witten
\emph{et al.} have considered such a problem with a constraint of
sparsity on $\mathcal{U}$. To solve this problem, they proposed an
algorithm which performs a penalized SVD of the matrix of interest
in the constrained optimization problem. Therefore, it is possible
to obtain a sparse approximation $\tilde{U}^{(q)}$ of the solution
of~\eqref{eq:Opti_Fisher} by doing a penalized SVD of $S_{B}^{(q)}$
with the algorithm of~\cite{Witten09}. As previously, the orthogonality
of the column vectors of $\tilde{U}^{(q)}$ is not guaranteed but
this issue can be again tackled by Proposition~\ref{Prop-2}. \textcolor{black}{From
a} practical point of view, this third approach can be implemented
as follows:

\begin{algorithm}[H]
{\small \caption{-- F-step of the sparseFEM-3 algorithm}
}{\small \par}
\begin{enumerate}
\item Let $M_{1}=S_{B}^{(q)}$ and $d=\mathrm{rank}\left(S_{B}\right).$
\item For $j\in\{1,\dots,d\}$:

\begin{enumerate}
\item Solve $\hat{u}_{j}^{(q)}=\mathrm{arg}\max_{u_{j}}u_{j}^{t}M_{j}v_{j}$
w.r.t. $\left\Vert u_{j}\right\Vert _{2}^{2}\leq1$, $\left\Vert v_{j}\right\Vert _{2}^{2}\leq1$
and $\sum_{\ell=1}^{p}\left|u_{j\ell}^{(q)}\right|\leq\lambda_{1}$
using the penalized SVD algorithm of~\cite{Witten09}.
\item Update $M_{j+1}=M_{j}-\lambda_{j}u_{j}^{(q)}v_{j}^{t}$.
\end{enumerate}
\item $\hat{U}^{(q)}=[\hat{u}_{1}^{(q)},\dots,\hat{u}_{d}^{(q)}]$.
\item Compute the SVD of $\hat{U}^{(q)}=u^{(q)}\Lambda^{(q)}v^{(q)t}$ and
let\textit{\emph{ $\tilde{U}^{(q)}=u^{(q)}v^{(q)t}$.}}\end{enumerate}
\end{algorithm}

\subsection{Practical aspects}

The introduction of sparsity in the Fisher-EM algorithm presents several
practical aspects among which the ability to interpret the discriminative
axes. However, two questions remain: the choice of the hyper-parameter
which determines the level of sparsity and the implementation strategy
in the Fisher-EM algorithm. Both aspects are discussed below.

\paragraph*{Choice of the tuning parameter}

The choice of the threshold $\lambda$ is an important problem since
the number of zeros in the $d$ discriminative axes depends directly
on the degree of sparsity. In~\cite{Zou2006}, Zou \emph{et al.}
chose the hyper-parameter of their sparse PCA with a criterion based
on the explanation of the variance approximated by the sparse principal
components. In~\cite{Witten10}, Witten and Tibshirani proposed for
their sparse-kmeans to base the choice of the tuning parameter on
a permutation method closely related to the gap statistic previously
proposed by Tibshirani \emph{et al.}~\cite{Tibshirani01} for estimating
the number of components in standard kmeans. Since our model is defined
in a Gaussian mixture context, we propose to use the BIC criterion
to select the threshold $\lambda$. According to the consistency results
obtained by~Zou \emph{et al.}~\cite{Zou07} and the fact that the
sparsity constraint is applied on the projection matrix $U$, the
effective number of parameters to estimate in the DLM$_{[\Sigma_{k}\beta_{k}]}$
model is:
\[
\gamma_{e}=(K-1)+Kd+\left(d[p\text{−}(d+1)/2]-\mathbf{d_{e}}\right)+Kd(d+1)/2+K
\]
where $\mathbf{d_{e}}$ is the number of zeros in the loading matrix.
In the same manner, this effective number of parameters to estimate
can be declined for the $11$ other sub-models of the DLM family.

\paragraph*{Implementation of the sparse Fisher-EM algorithm}

We identified two different ways to implement the sparse versions
of the Fisher-EM algorithm. First, it could be possible to replace
the usual F-step of the Fisher-EM algorithm by a sparse F-step developed
previously. The resulting algorithm would sparsify at each iteration
the projection matrix $U$ before estimating the model parameters.
This can however leads to some drawbacks since an early introduction
of the $\ell_{1}$ penalty could penalize too much the loadings of
the projection matrix, in particular if the initialization is far
away from the optimal situation. An alternative implementation would
be to, first, execute the traditional Fisher-EM algorithm until convergence
and, then, initialize the sparse Fisher-EM algorithm with the result
of the Fisher-EM algorithm. This strategy should combine the efficiency
of the standard Fisher-EM algorithm with the advantage of having a
sparse selection of discriminative variables. We therefore recommend
this second implementation and it will be used in the experiments
presented in the following sections.

\section{Experimental comparison}

This section presents comparisons with existing variable selection
techniques on simulated and real-world data sets.

\subsection{Comparison on simulated data}

\begin{table}[p]
\begin{centering}
\begin{tabular}{|c|l|c|c|}
\hline 
Simulation & Method & Clustering error & non-zero variables\tabularnewline
\hline 
\hline 
$n=30$ $\mu=0.6$ & kmeans & $0.48\pm0.05$ & $25.0\pm0.0$\tabularnewline
 & sparse-kmeans & $0.47\pm0.07$ & $19.0\pm6.6$\tabularnewline
 & Clustvarsel & $0.62\pm0.06$ & $22.2\pm1.2$\tabularnewline
 & Selvarclust & $0.40\pm0.03^{*}$ & $8.1\pm1.9^{*}$\tabularnewline
\cline{2-4} 
 & sparseFEM-1 & $0.47\pm0.06$ & $2.6\pm0.9$\tabularnewline
 & sparseFEM-2 & $0.48\pm0.07$ & $4.7\pm1.8$\tabularnewline
 & sparseFEM-3 & $0.48\pm0.03$ & $2.0\pm0.0$\tabularnewline
\hline 
\hline 
$n=30$ $\mu=1.7$ & kmeans & $0.14\pm10.2$ & $25.0\pm0.0$\tabularnewline
 & sparse-kmeans & $0.08\pm0.06$ & $23.6\pm0.8$\tabularnewline
 & Clustvarsel & $0.41\pm0.10$ & $16.6\pm10.4$\tabularnewline
 & Selvarclust & $0.08\pm0.08^{*}$ & $6.8\pm1.4^{*}$\tabularnewline
\cline{2-4} 
 & sparseFEM-1 & $0.14\pm0.13$ & $3.5\pm0.8$\tabularnewline
 & sparseFEM-2 & $0.20\pm0.12$ & $5.4\pm2.2$\tabularnewline
 & sparseFEM-3 & $0.17\pm0.11$ & $2.0\pm0.0$\tabularnewline
\hline 
\hline 
$n=300$ $\mu=0.6$ & kmeans & $0.43\pm0.03$ & $25.0\pm0.0$\tabularnewline
 & sparse-kmeans & $0.46\pm0.03$ & $24.0\pm0.5$\tabularnewline
 & Clustvarsel & $0.42\pm0.03$ & $25.0\pm0.0$\tabularnewline
 & Selvarclust & $0.34\pm0.02^{*}$ & $7.0\pm1.7^{\text{*}}$\tabularnewline
\cline{2-4} 
 & sparseFEM-1 & $0.42\pm0.03$ & $2.4\pm1.0$\tabularnewline
 & sparseFEM-2 & $0.43\pm0.03$ & $5.2\pm2.7$\tabularnewline
 & sparseFEM-3 & $0.43\pm0.04$ & $2.3\pm1.1$\tabularnewline
\hline 
\hline 
$n=300$ $\mu=1.7$ & kmeans & $0.05\pm0.06$ & $25.0\pm0.0$\tabularnewline
 & sparse-kmeans & $0.05\pm0.01$ & $15.0\pm0.0$\tabularnewline
 & Clustvarsel & $0.05\pm0.01$ & $25.0\pm2.0$\tabularnewline
 & Selvarclust & $0.05\pm0.01^{*}$ & $5.6\pm0.9^{*}$\tabularnewline
\cline{2-4} 
 & sparseFEM-1 & $0.04\pm0.01$ & $10.2\pm2.4$\tabularnewline
 & sparseFEM-2 & $0.05\pm0.01$ & $8.8\pm1.7$\tabularnewline
 & sparseFEM-3 & $0.04\pm0.01$ & $5.6\pm1.6$\tabularnewline
\hline 
\end{tabular}
\par\end{centering}

\caption{\label{tab:Simul_Sparse_comp}Clustering errors and numbers of non-zero
variables averaged over 20 simulations for several clustering methods
with $p=25$ and $q=5$. The results of Selvarclust are reported from~\cite{Celeux11}{\scriptsize .}}
\end{table}

This first experiment aims to compare on simulated data the performances
of the proposed sparseFEM algorithms (sparseFEM-1, sparseFEM-2, sparseFEM-3)
to several competitors: Clustvarsel of Raftery and Dean~\cite{Raftery06},
Selvarclust of Maugis \emph{et al.}~\cite{Maugis09a} and sparse-kmeans
of Witten and Tibshirani~\cite{Witten10}. For this experiment, we
replicated the simulation proposed in Section~3.3 of~\cite{Witten10}.
We simulated $K=3$ Gaussian components of $n$ observations in a
$25$-dimensional observation space whose components differ only on
$q=5$ features. The used parameters were $\mu_{kj}=\mu\times(\mathbf{1}_{k=1,j\leq q},-\mathbf{1}_{k=2,j\leq q}),\,\forall k\in\{1,2,3\}$
and $\forall j\in\{1,\dots,p\}$ for the mean components and $\sigma_{kj}^{2}=1$
for the variance terms. Moreover, four different situations are considered:
$n=30$ or $300$ and $\mu=0.6$ or $1.7$. Each simulation was replicated
$25$ times. 

Table~\ref{tab:Simul_Sparse_comp} presents the means and standard
deviations for both the clustering error and the number of non-zero
variables for kmeans, sparse-kmeans, Clustvarsel, Selvarclust and
the 3 procedures of sparseFEM. Note that the results of Selvarclust
corresponds to clustering errors and non-zero variable rates found
in~\cite{Celeux11}. Moreover, the reported results concerning the
3 sparse Fisher-EM algorithms were obtained with the DLM$_{[\alpha_{k}\beta]}$
model for a sparsity level corresponding to the highest BIC value
obtained at each trial. 

Two main remarks can be done on the results presented in Table~\ref{tab:Simul_Sparse_comp}.
First, by considering either the most difficult clustering cases ($n=30$
and $\mu=0.6$) or the easiest one ($n=300$ and $\mu=0.6$ or $1.7$),
all approaches present approximatively the same results in terms of
clustering error rate. The methods differ however in the number of
variables they retain to perform the clustering: Clustevarsel, sparseFEM-1,
sparseFEM-2 and sparseFEM-3 turn out to select significantly less
variables than sparse-kmeans and Clustvarsel. In particular, Clustevarsel
and the sparseFEM algorithms select a number of useful variables consistent
with the actual number of meaningful variables ($q=5$). Second, in
the situation where $n=30$ and $\mu=1.7$, Selvarclust and sparse-kmeans
present the lowest misclassification rate ($0.08$), even though the
clustering error of sparseFEM-1 and kmeans remains relatively low
($0.14$). However, as previously, only Clustevarsel and the sparseFEM
algorithms select a number of variables close to the right number
of discriminative features.

\subsection{Comparison on real data sets}

Real-world data sets are now used to compare the efficiency of the
sparseFEM algorithms to its competitors for both the clustering and
variable selection tasks. We considered 7 different benchmark data
sets coming mostly from the UCI machine learning repository. We selected
these data sets because they represent a wide range of situations
in term of number of observations $n$, number of variables $p$ and
number of groups $K$. These characteristics are given in the top
row of Table~\ref{Tab:sparse_UCI} and a detailed description of
these data sets can be found in~\cite{Bouveyron12a}.

For this experiment, we used the 3 sparseFEM algorithms and the 3
sparse methods introduced previously (sparse-kmeans, Clustvarsel and
Selvarclust). Since the evaluation of the clustering performance is
a complex and very discussed problem, we chose to evaluate the clustering
performance as the adequacy between the resulting partition of the
data and the known labels for these data. For each data set, the sparseFEM
algorithms were initialized with a common random partition drown from
a multinomial distribution with equal prior probabilities. For Clustvarsel,
Selvarclust and sparse-kmeans, the initialization was done with their
own deterministic procedure. Moreover, for each method, the number
$K$ of groups has been fixed to the actual one. For Clustvarsel,
Selvarclust and sparse-kmeans, the determination of the other free
parameters was done according to the tools provided by each approach.
For the sparseFEM algorithms, we used the penalized BIC criterion
to select the model and the level of sparsity. More precisely, we
first chose the model presenting the highest average BIC value on
20 replications. Then, given the selected model, we selected the level
of sparsity associated with the highest BIC value.

\begin{sidewaystable}
\noindent \begin{centering}
\begin{tabular}{|l|c|c|c|c|c|c|c|}
\hline 
 & \multicolumn{1}{c|}{iris } & \multicolumn{1}{c|}{wine } & chiro & zoo  & glass  & satimage  & usps358\tabularnewline
 & \textit{(p=4,K=3)} & \textit{(p=13,K=3)} & \textit{(p=17,K=3)} & \textit{(p=16,K=7)} & \textit{(p=9,K=7)} & \textit{(p=36,K=6)} & \textit{(p=256,K=3)}\tabularnewline
Approaches & \textit{(n=150)} & \textit{(n=178)} & \textit{(n=178)} & \textit{(n=101)} & \textit{(n=214)} & \textit{(n=4435)} & \textit{(n=1726)}\tabularnewline
\hline 
\hline 
sparseFEM-1 & \cellcolor[gray]{0.9}96.5$\pm0.3$ & 97.8$\pm0.2$ & 84.2$\pm11$ & 71.4$\pm8.5$ & 50.2$\pm1.9$ & 69.6$\pm0.1$ & \cellcolor[gray]{0.9}84.7$\pm3.2$\tabularnewline
 & \textit{(2.0$\pm0.0$)} & \textit{(2.0$\pm0.0$)} & \textit{(2.}3\textit{$\pm0.5$)} & \textit{(13$\pm2.5$)} & \textit{(6.0$\pm1.0$)} & \textit{(36$\pm0.0$)} & \textit{(5.5$\pm0.7$)}\tabularnewline
sparseFEM-2 & 89.9$\pm0.4$ & \cellcolor[gray]{0.9}98.3$\pm0.0$ & 84.8$\pm12$ & 70.1$\pm12.2$ & 48.4$\pm3.0$ & 67.5$\pm1.6$ & 82.8$\pm9.1$\tabularnewline
 & \textit{(4.0$\pm0.0$)} & \textit{(4.0$\pm0.0$)} & \textit{(2.0$\pm0.6$)} & \textit{(14$\pm3.6$)} & \textit{(6.6$\pm0.7$)} & \textit{(36$\pm.0.0$)} & \textit{(15.5$\pm16$)}\tabularnewline
sparseFEM-3 & \cellcolor[gray]{0.9}96.5$\pm0.3$ & 97.8$\pm0.0$ & 82.9$\pm12$ & 72.0$\pm4.3$ & 48.2$\pm2.7$ & \cellcolor[gray]{0.9}71.8$\pm2.3$ & 79.1$\pm7.4$\tabularnewline
 & \textit{(2.0$\pm0.3$)} & \textit{(2.0$\pm0.0$)} & \textit{(2.0$\pm0.0$)} & \textit{(10$\pm2.8$)} & \textit{(7.0}$\pm0.0$\textit{)} & \textit{(36$\pm0.0$)} & \textit{(6.0$\pm1.3$)}\tabularnewline
\hline 
sparse-kmeans & 90.7 & 94.9 & \cellcolor[gray]{0.9}95.3 & 79.2 & \cellcolor[gray]{0.9}52.3 & 71.4 & 74.7\tabularnewline
 & \textit{(4.0)} & \textit{(13.0)} & \textit{(17.0)} & \textit{(16.0)} & \textit{(6.0)} & \textit{(36.0)} & \textit{(213)}\tabularnewline
\hline 
Clustvarsel & 96.0 & 92.7 & 71.1 & 75.2 & 48.6 & 58.7 & 48.3\tabularnewline
 & \textit{(3.0)} & \textit{(5.0)} & \textit{(6.0)} & \textit{(3.0)} & \textit{(3.0)} & \textit{(19.0)} & \textit{(6.0)}\tabularnewline
Selvarclust & 96.0 & 94.4 & 92.6 & \cellcolor[gray]{0.9}92.1 & 43.0 & 56.4 & 36.7\tabularnewline
 & \textit{(3.0)} & \textit{(5.0)} & \textit{(8.0)} & \textit{(5.0)} & \textit{(6.0)} & \textit{(22.0)} & \textit{(5.0)}\tabularnewline
\hline 
\end{tabular}
\par\end{centering}

\caption{\label{Tab:sparse_UCI}Clustering accuracies and their standard deviations
(in percentage){\small{} }on 7 UCI datasets (iris, wine, chironomus,
zoo, glass, satimage, usps358) averaged on 20 trials. The average
number of nonzero variables is reported in brackets. No standard deviation
is reported for Clustvarsel/Selvarclust and sparse-kmeans since their
initialization procedure is deterministic and always provides the
same initial partition. }
\end{sidewaystable}

Table~\ref{Tab:sparse_UCI} presents the average clustering accuracies
and the associated standard deviations obtained for the 6 approaches.
The average number of non-zero variables is also reported within brackets
in the table. The results associated to the sparseFEM algorithms have
been obtained by averaging over $20$ trials with random initializations.
The lack of standard deviations for Clustvarsel, Selvarclust and sparse-kmeans
is due to the deterministic initializations they use. It first appears
that the three sparse versions of the Fisher-EM algorithm perform
rather similarly both in term of clustering and variable selection.
It also appears clearly that the sparseFEM algorithms are competitive
to existing methods regarding both the clustering performances and
the selection of variables. Indeed, the sparseFEM algorithms obtain
the best clustering accuracies on 4 of the 7 data sets whereas sparse-kmeans
and Selvarclust obtain the best clustering accuracies on respectively
2 and 1 data sets. The sparseFEM algorithms differ also from sparse-kmeans
regarding the number of variables retained to perform the clustering.
Indeed, sparse-kmeans turns out to frequently select a large number
of variables whereas sparseFEM is usually rather sparse in the number
of selected variables. Finally, Clustvarsel and Selvarclust turn out
to select most of the time few variables, particularly in high-dimensional
spaces, which seems to obstruct their clustering performance. To summarize,
this experiment has shown that the sparseFEM algorithms seem to be
good compromises between sparse-kmeans and Clustvarsel /Selvarclust
in term of variable selection and, certainly thanks to this characteristic,
they also provide good clustering results.

\subsection{Comparison on the usps358 data set}

We focus now on the usps358 dataset to stress the role of variable
selection in the interpretation of clustering results. The original
dataset is made of $7\,291$ images divided in 10 classes corresponding
to the digits from 0 to 9. Each digit is a $16\times16$ gray level
image represented as a 256-dimensional vector. For this experiment,
we extracted a subset of the data ($n=1\,756$) corresponding to the
digits 3, 5 and 8 which are the three most difficult digits to discriminate.
This smaller dataset is hereafter called usps358. Figure~\ref{fig:USPS_benchmark}
depicts the group mean images obtained from the true labels in the
usps358 dataset. For this experiment, we used the three sparseFEM
algorithms with the model and the level of sparsity selected in the
previous experiment for this data set. For Clustvarsel, Selvarclust
and sparse-kmeans, the level of sparsity was again selected with their
own selection procedure.

\begin{figure}[p]
\begin{centering}
\subfloat[]{\includegraphics[bb=55bp 60bp 480bp 475bp,clip,width=0.2\columnwidth]{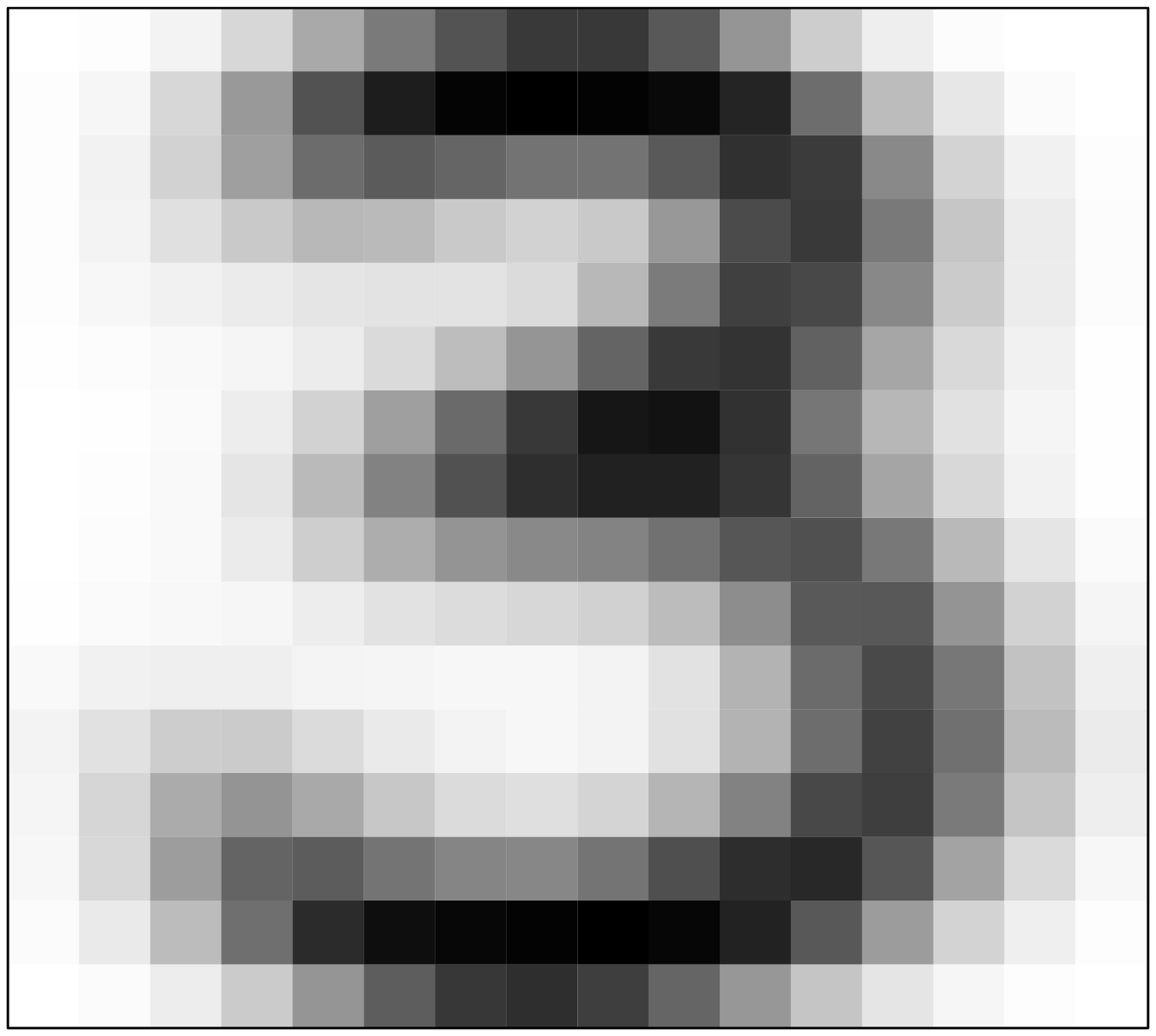}

}\hspace{1.5cm}\subfloat[]{\includegraphics[bb=55bp 60bp 480bp 475bp,clip,width=0.2\columnwidth]{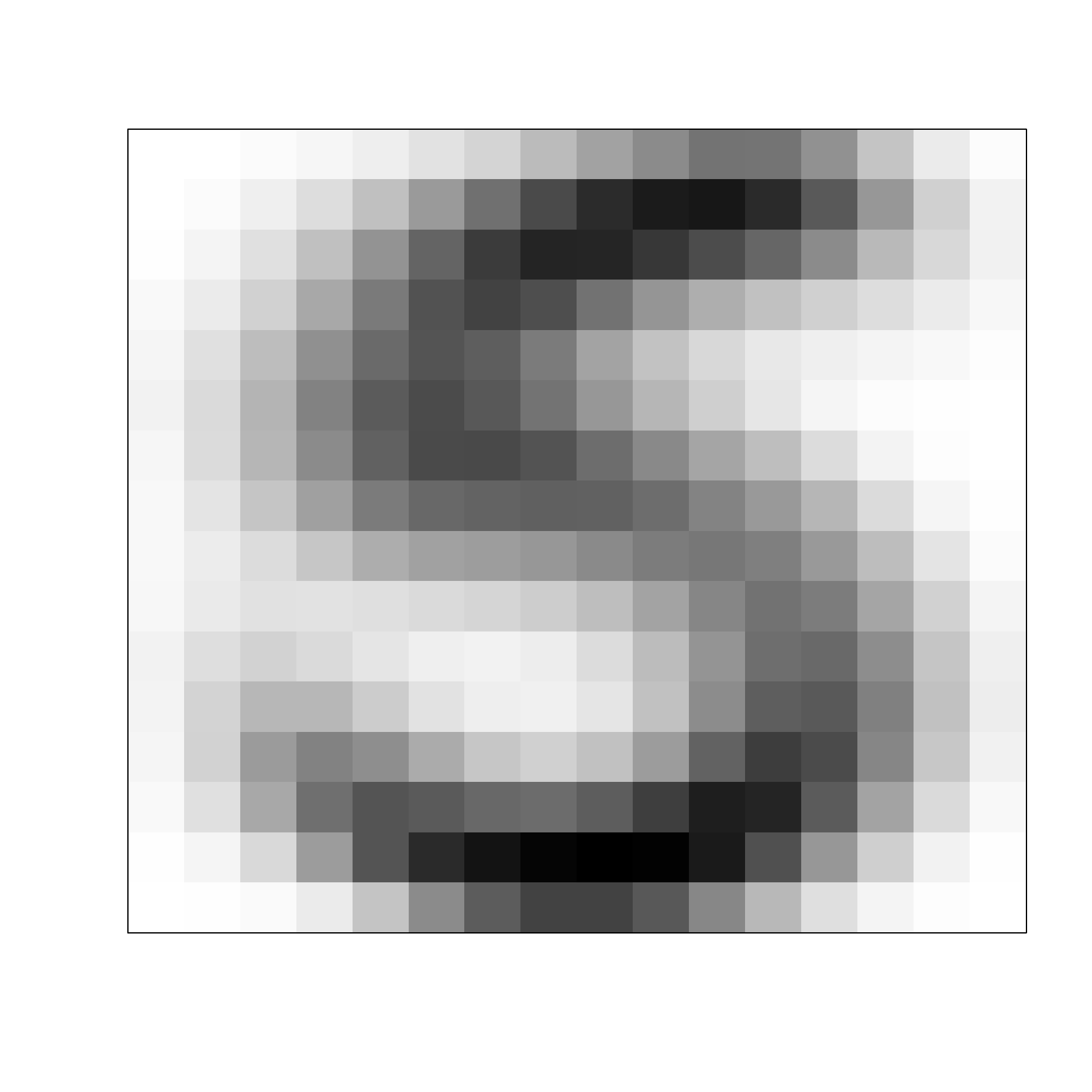}

}\hspace{1.5cm}\subfloat[]{\includegraphics[bb=55bp 60bp 480bp 475bp,clip,width=0.2\columnwidth]{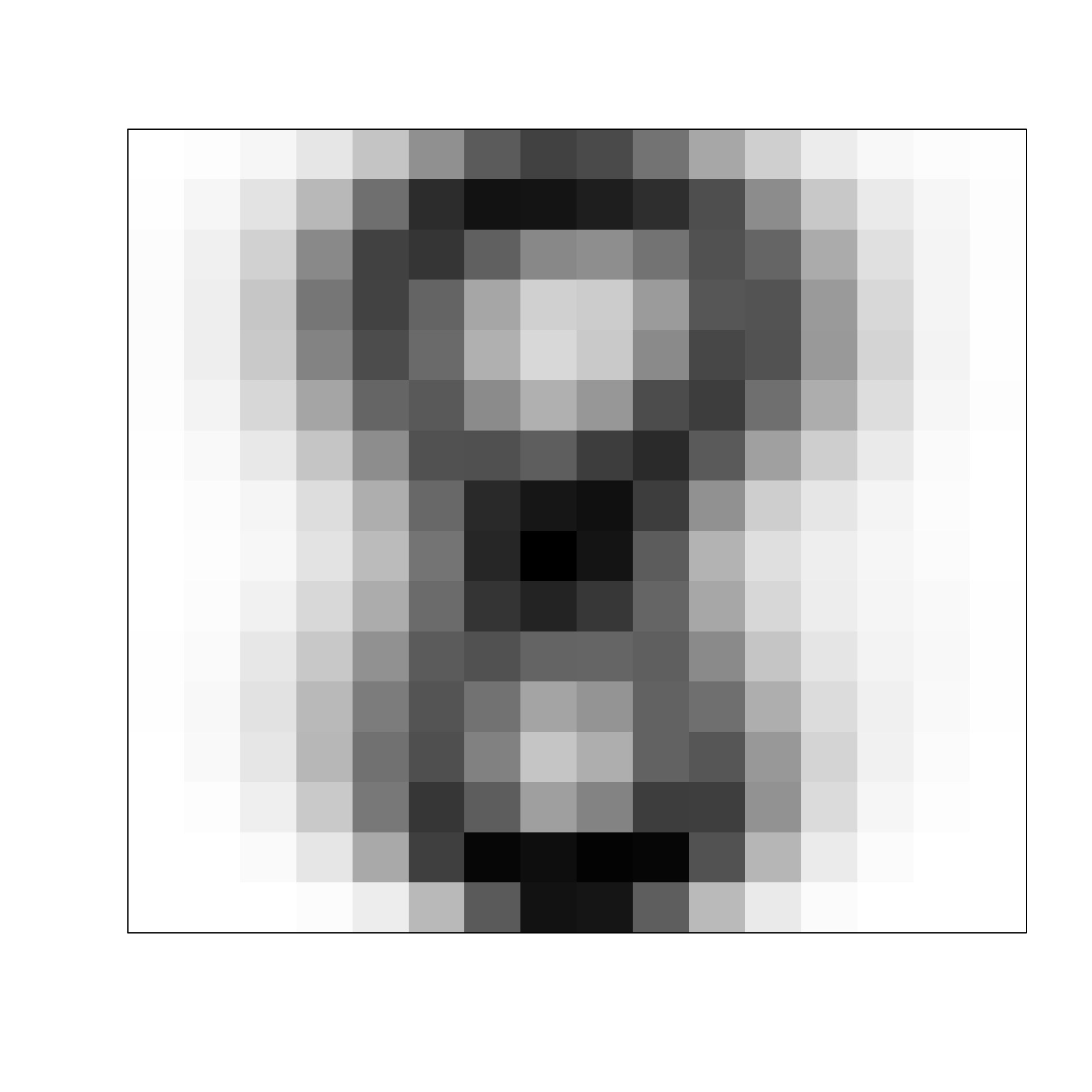}

}
\par\end{centering}

\caption{\label{fig:USPS_benchmark}Group means obtained from the true labels
in the USP358 datasets.}

\begin{centering}
\subfloat[Sparse-kmeans]{\includegraphics[bb=55bp 60bp 480bp 475bp,clip,width=0.2\columnwidth]{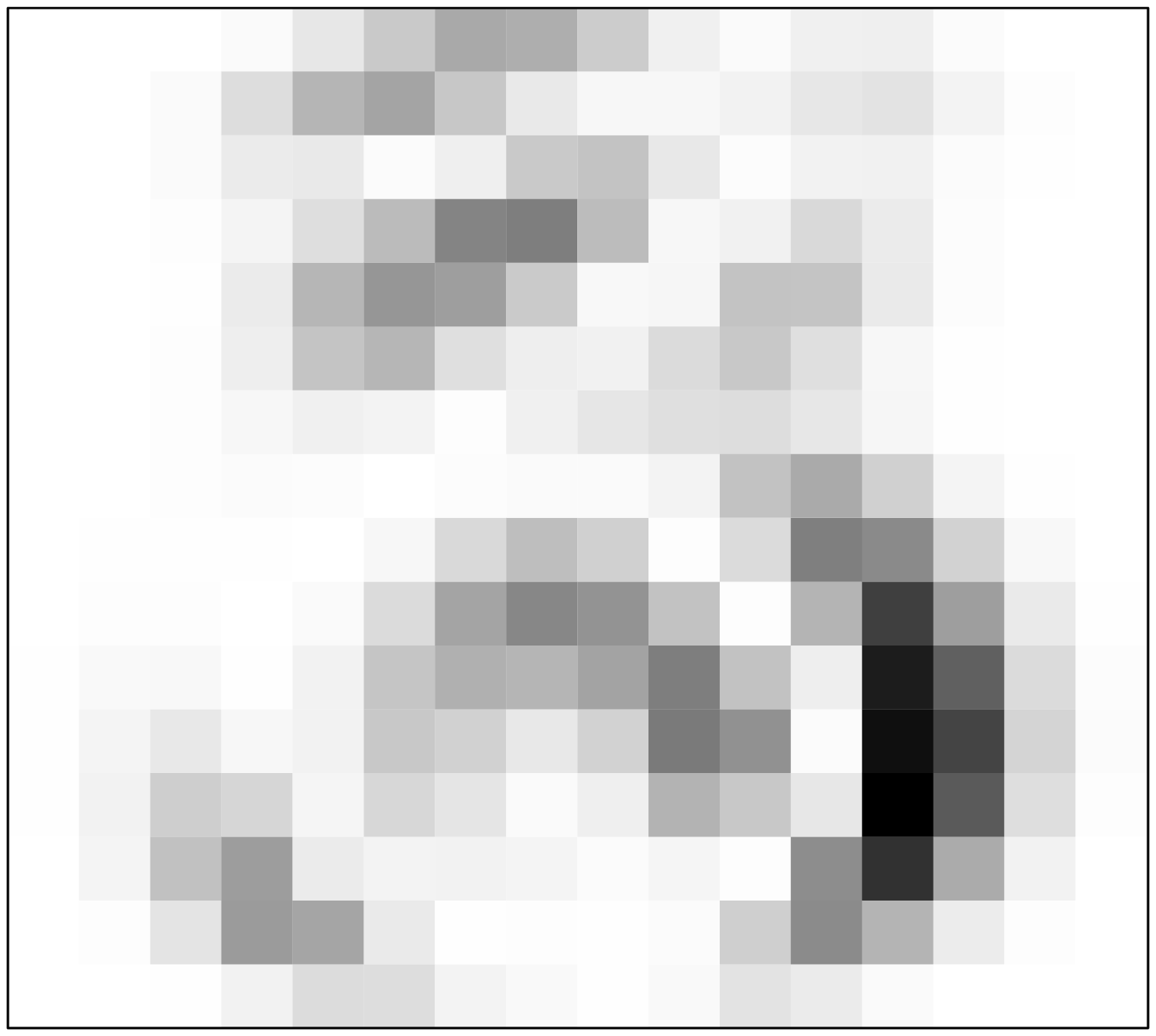}

}\hspace{1.5cm}\subfloat[Clustvarsel]{\includegraphics[bb=55bp 60bp 480bp 475bp,clip,width=0.2\columnwidth]{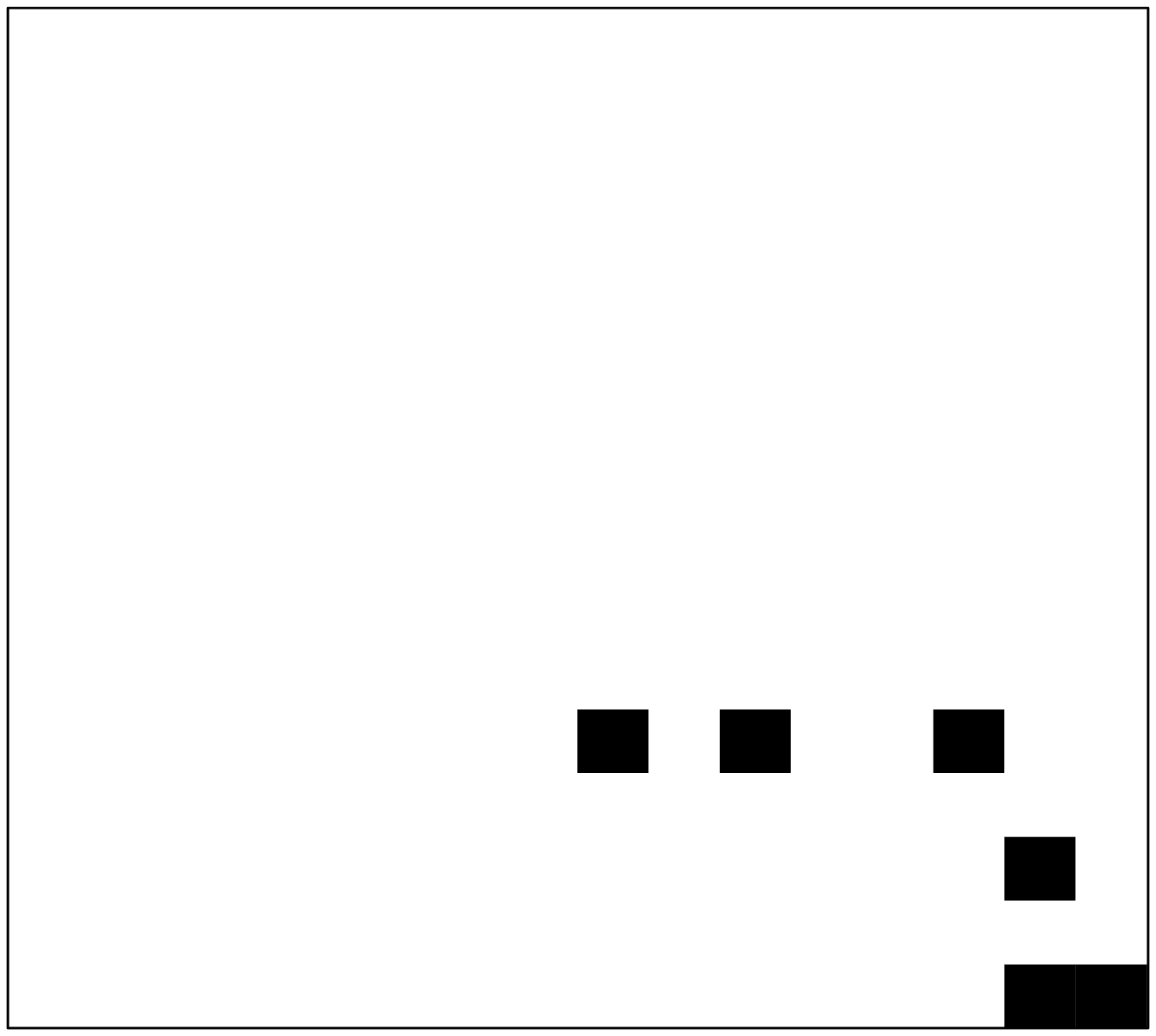}

}\hspace{1.5cm}\subfloat[Selvarclust]{\includegraphics[bb=55bp 60bp 480bp 475bp,clip,width=0.2\columnwidth]{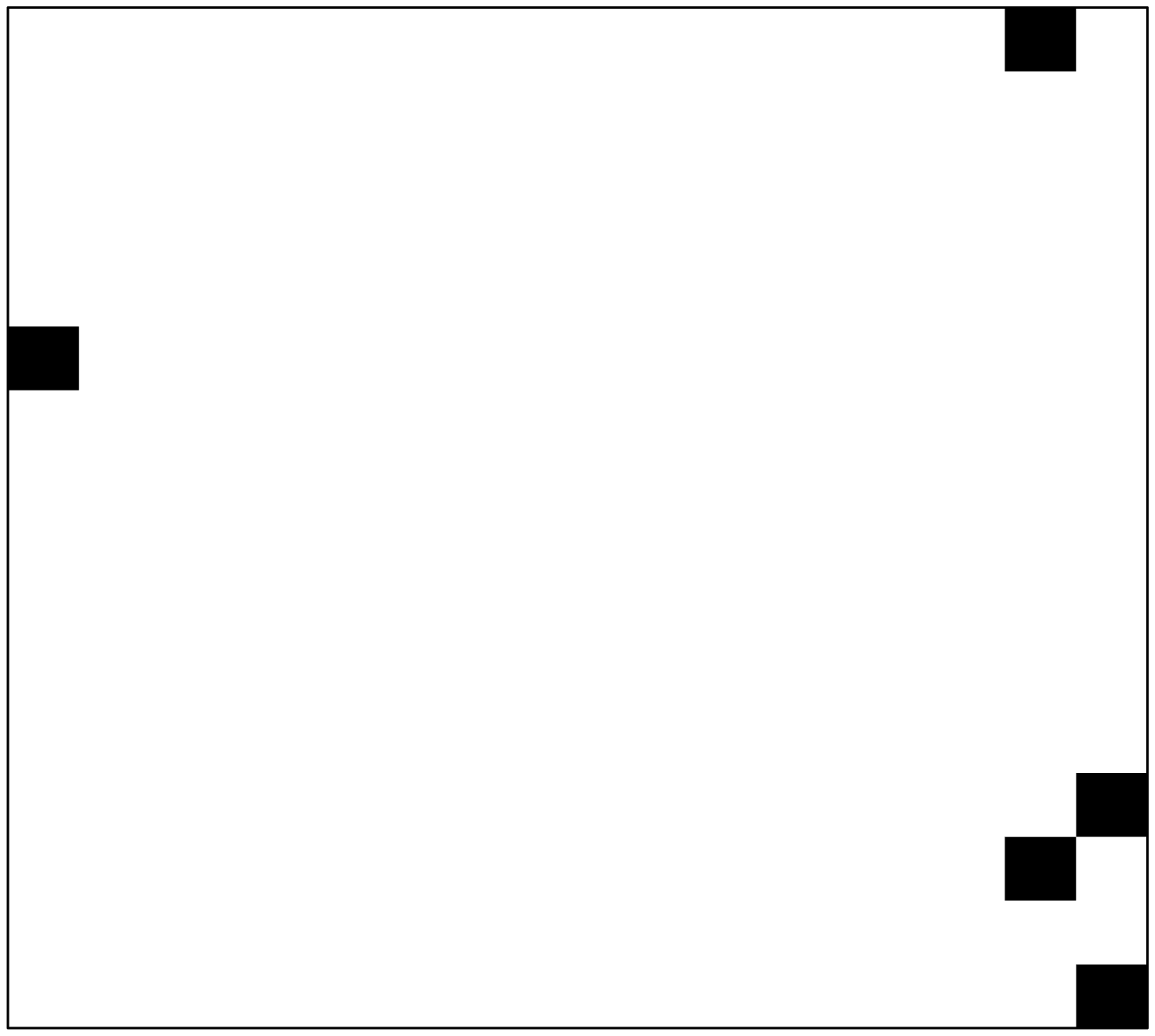}

}
\par\end{centering}

\caption{\label{fig:USPS_existingApp}Variable selection obtained from (a)
the sparse-kmeans algorithm, (b) the Clustvarsel approach and (c)
the Selvarclust approach.}

\begin{centering}
\subfloat[sparseFEM-1]{\includegraphics[bb=55bp 60bp 480bp 475bp,clip,width=0.2\columnwidth]{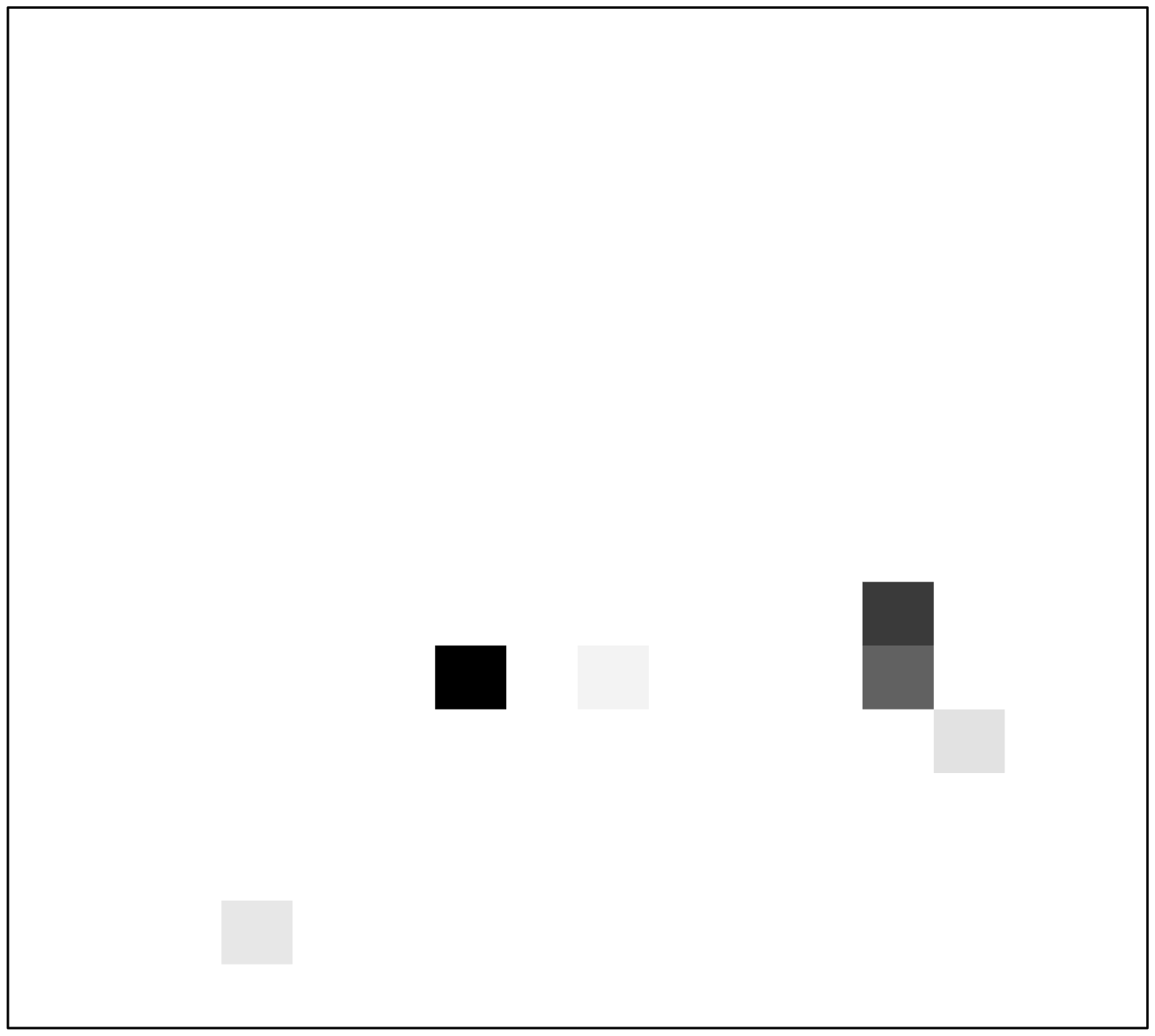}

}\hspace{1.5cm}\subfloat[sparseFEM-2]{\includegraphics[bb=55bp 60bp 480bp 475bp,clip,width=0.2\columnwidth]{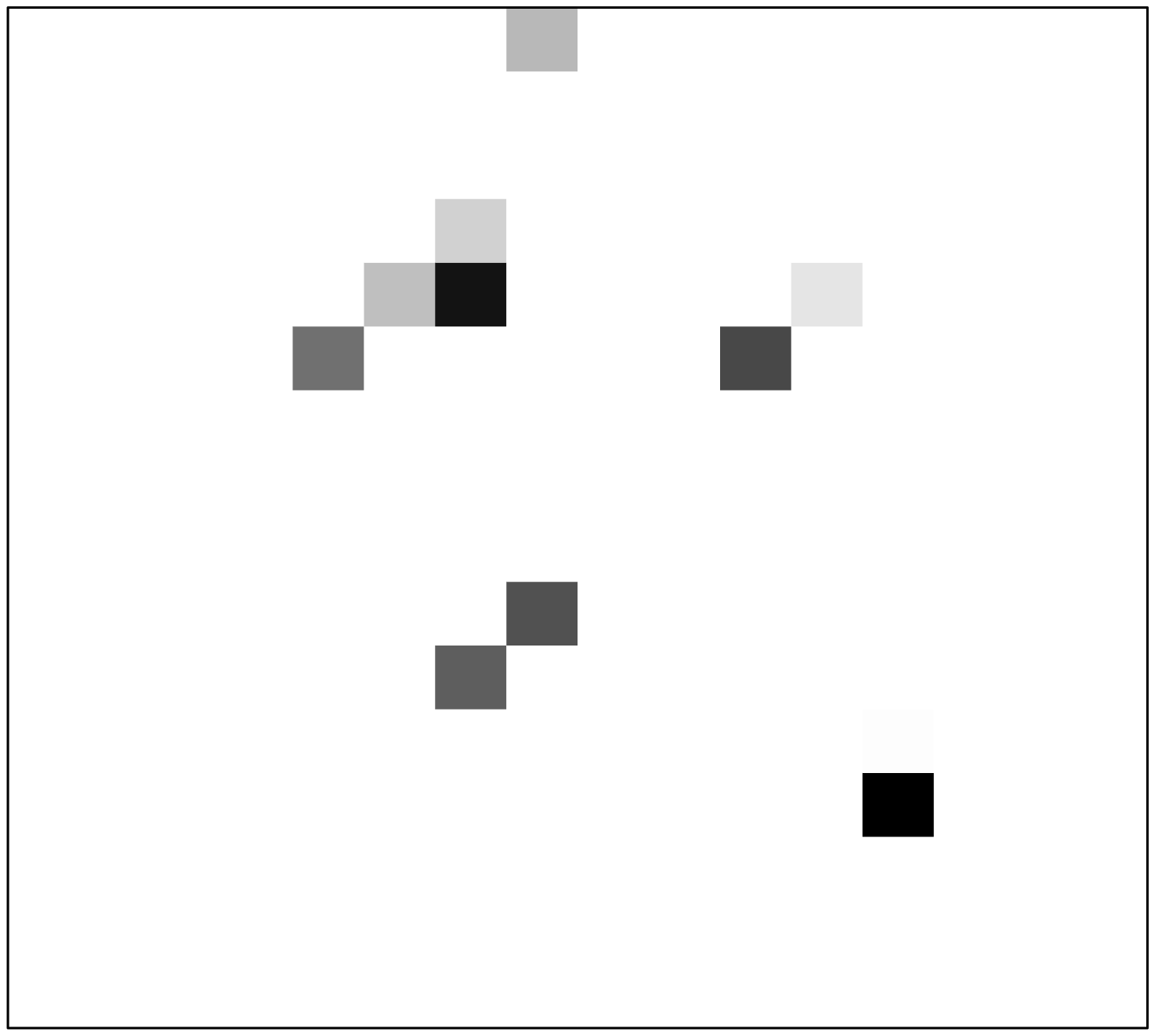}

}\hspace{1.5cm}\subfloat[sparseFEM-3]{\includegraphics[bb=55bp 60bp 480bp 475bp,clip,width=0.2\columnwidth]{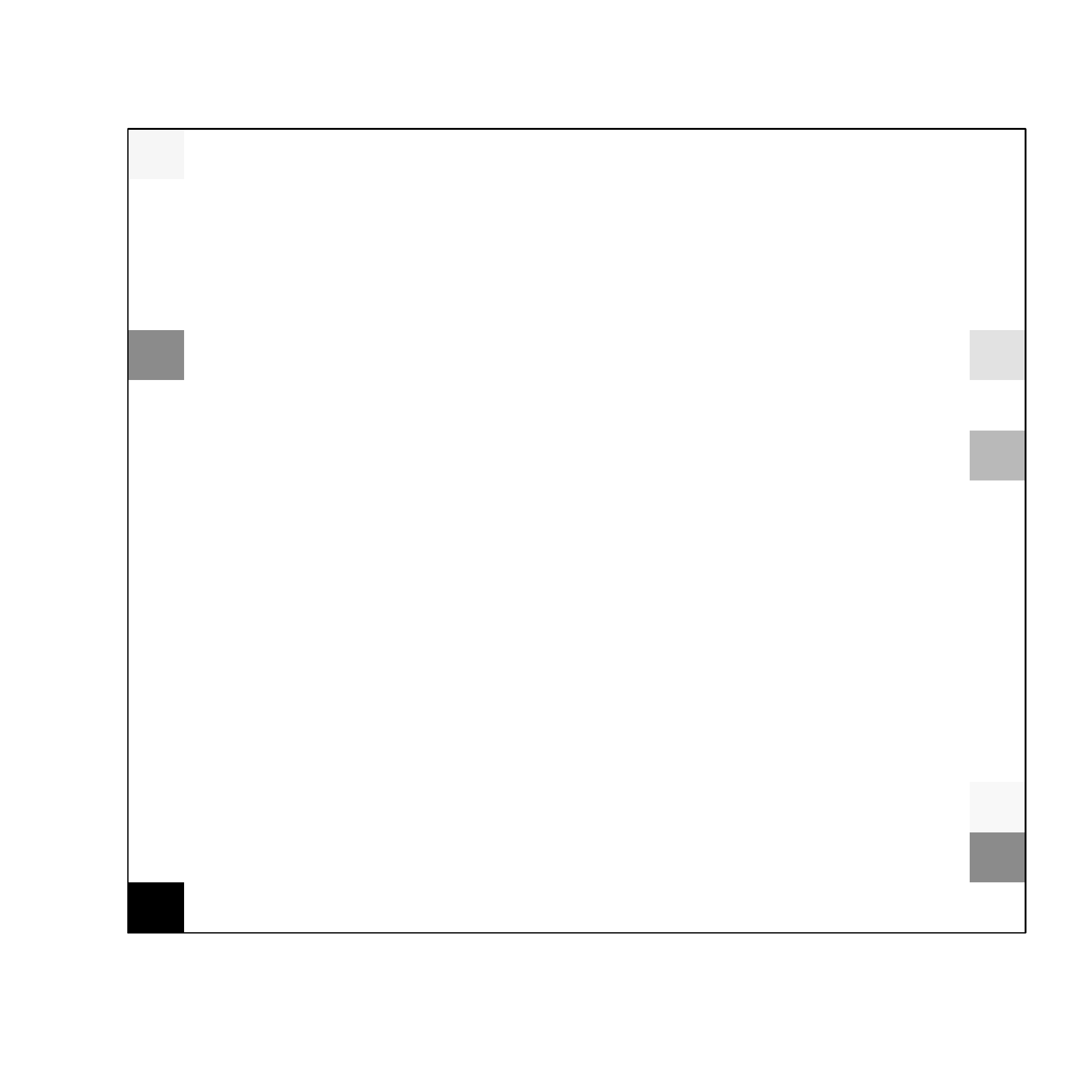}

}
\par\end{centering}

\caption{\label{fig:USPS_SFEM}Variable selection obtained from (a) the sparseFEM-1,
(b) the sparseFEM-2 and (c) the sparseFEM-3 procedures with sparsity
levels selected by the penalized~BIC.}
\end{figure}

Figures~\ref{fig:USPS_existingApp} illustrates, as images, the features
selected respectively by sparse-kmeans~(Figure~\ref{fig:USPS_existingApp}.a),
Clustvarsel (Figure~\ref{fig:USPS_existingApp}.b) and Selvarclust
(Figure~\ref{fig:USPS_existingApp}.c). In Figure~\ref{fig:USPS_existingApp}.a,
the weight assigned by sparse-kmeans to each feature is represented
by gray levels: lighter is the pixel, weaker is the absolute value
of the weight of the associated feature. For Clustvarsel and Selvarclust,
only the selected variables are depicted and are associated to black
pixels as it is illustrated in Figures~\ref{fig:USPS_existingApp}.b
and \ref{fig:USPS_existingApp}.c respectively. These representations
are associated to the following clustering accuracies $74.7\%$, $48.3\%$
and $36.7\%$ for sparse-kmeans, Clustvarsel and Selvarclust respectively.
For the 3 sparseFEM algorithms, we superimposed in a same figure the
absolute values of the loadings of the two discriminative axes fitted
by the sparseFEM-1, sparseFEM-2 and sparseFEM-3 procedures. The associated
clustering accuracies are respectively $84.7\%$, $82.8\%$ and $79.1\%$.

First of all, it appears that Clustvarsel and Selvarclust select significantly
fewer variables than both sparse-kmeans or the sparseFEM procedures.
Furthermore, most of the selected variables by Clustvarsel and Selvarclust
turn out to be irrelevant to discriminate the digit 3 from the digits
5 and 8. For instance, in Figures~\ref{fig:USPS_existingApp}.b and
\ref{fig:USPS_existingApp}.c, we can observe that the black pixels
located in right bottom corner, do not correspond to any discriminative
variable. This certainly explain the poor clustering performances
($48.3\%$ for Clustvarsel and $36.7\%$ for Selvarclust) observed
on this data set for these methods. On the contrary, sparse-kmeans
turns out to perform well in term of clustering performance ($74.7\%$
of clustering accuracy). Nevertheless, the number of selected variables
remains higher (213 selected variables amongst 256 original ones)
than we would expect to ease the interpretation of results. Finally,
sparseFEM-1 and sparseFEM-2 seem to answer quite well to both the
clustering task and the task of feature selection. Indeed, on the
one hand, the subset of selected pixels remains small for both algorithms:
6 and 15 pixels are selected amongst 256 for sparseFEM-1 and sparseFEM-2
respectively. Furthermore, the selected pixels appear to be relevant
to discriminate the classes associated with the three digits. For
instance, the darker pixel on the bottom right corner of Figure~\ref{fig:USPS_SFEM}.b
discriminates the digit 8 from the digits 3 and 5. On the other hand,
and certainly due to this relevant selection of variables, both algorithms
perform particularly well on this high-dimensional data set ($84.7\%$
for sparseFEM-1 and $82.8\%$ for sparseFEM-2). However, on this data
set, the sparseFEM-3 procedure shows a disappointing behavior regarding
the variable selection even though its clustering performance remains
satisfying. The fact that sparseFEM-3 succeeds in clustering the data
set even with a bad selection of variables is certainly due to the
nature of the DLM model which models also the non discriminative information
through the parameter $\beta_{k}$.

Table~\ref{CPU_usps358} presents the computing time of the studied
clustering methods (for a given model and with $\lambda$ and $K$
fixed) for clustering the usps358 data set. As we can remark, our
procedures are much faster than the sparse-kmeans, Clustvarsel and
Selvarclust algorithms. Consequently, the sparseFEM algorithms appear
once again to be good compromises, in practice, to cluster high-dimensional
data and select a set of discriminative variables in a reasonable
time.

\begin{table}
\begin{centering}
\begin{tabular}{lc|lc}
Approaches: & Procedure time (sec) & Approaches: & Procedure time (sec)\tabularnewline
\hline 
\hline 
sparseFEM-1 & $729.04$ & sparse-kmeans & $1\,567.75$\tabularnewline
sparseFEM-2 & \cellcolor[gray]{0.9}$387.12$ & Clustvarsel & $2\,957.70$\tabularnewline
sparseFEM-3 & $409.61$ & Selvarclust & $9\,257.10$\tabularnewline
\hline 
\end{tabular}
\par\end{centering}

\caption{\label{CPU_usps358}Computing times for the 3 versions of the sparseFEM
algorithm, sparse-kmeans, Clustvarsel and Selvarclust on the USPS358
data (for a given model and with $\lambda$ and $K$ fixed).}
\end{table}

\section{Application to the segmentation of hyperspectral images}

\begin{figure}
\begin{centering}
\includegraphics[bb=55bp 70bp 705bp 495bp,clip,width=0.6\columnwidth,height=0.5\columnwidth]{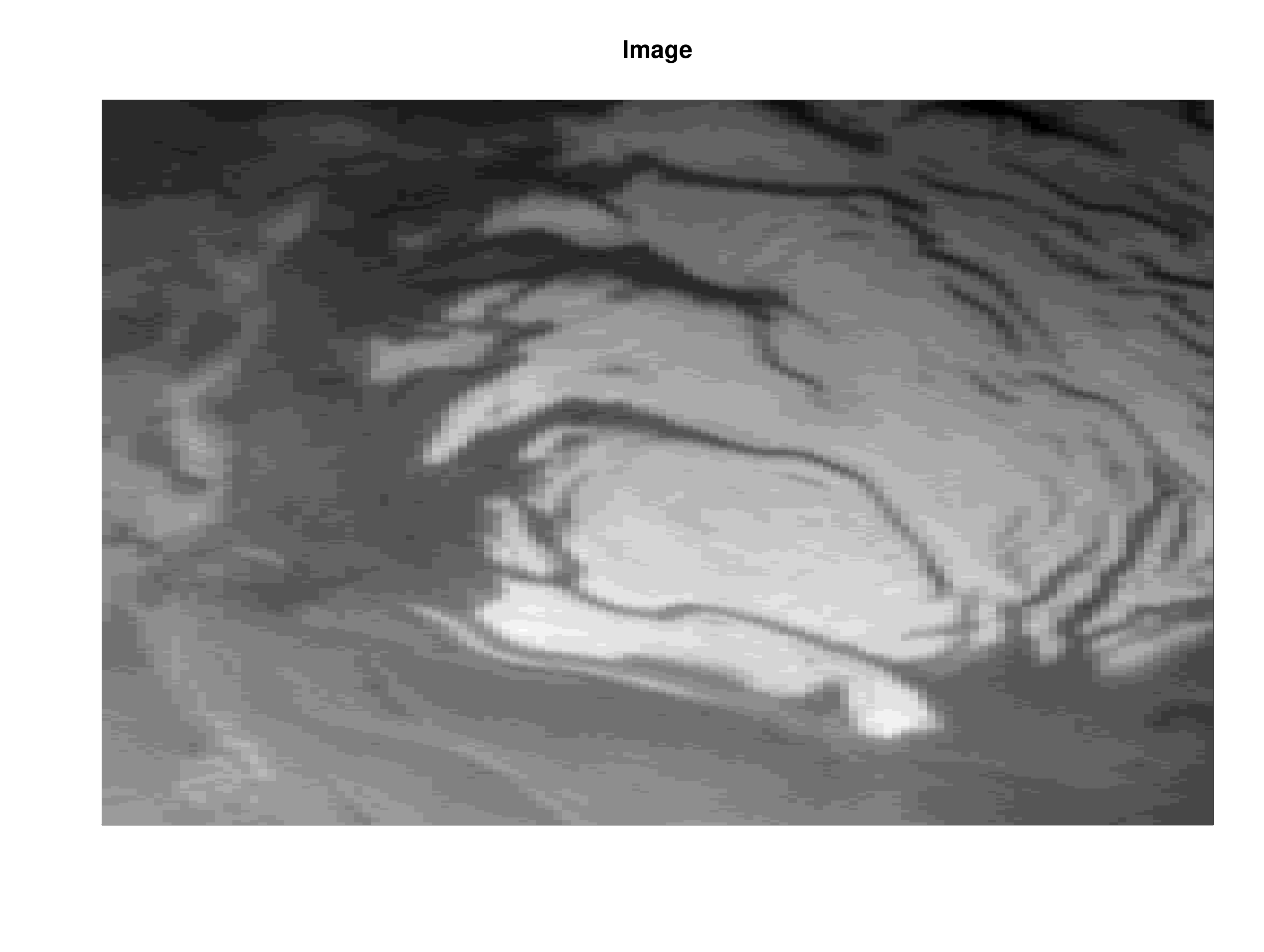}
\par\end{centering}

\caption{\label{fig:Mars-1}Image of the studied zone of the Martian surface.}
\end{figure}

\begin{figure}
\begin{centering}
\includegraphics[bb=30bp 45bp 765bp 380bp,clip,width=1\columnwidth]{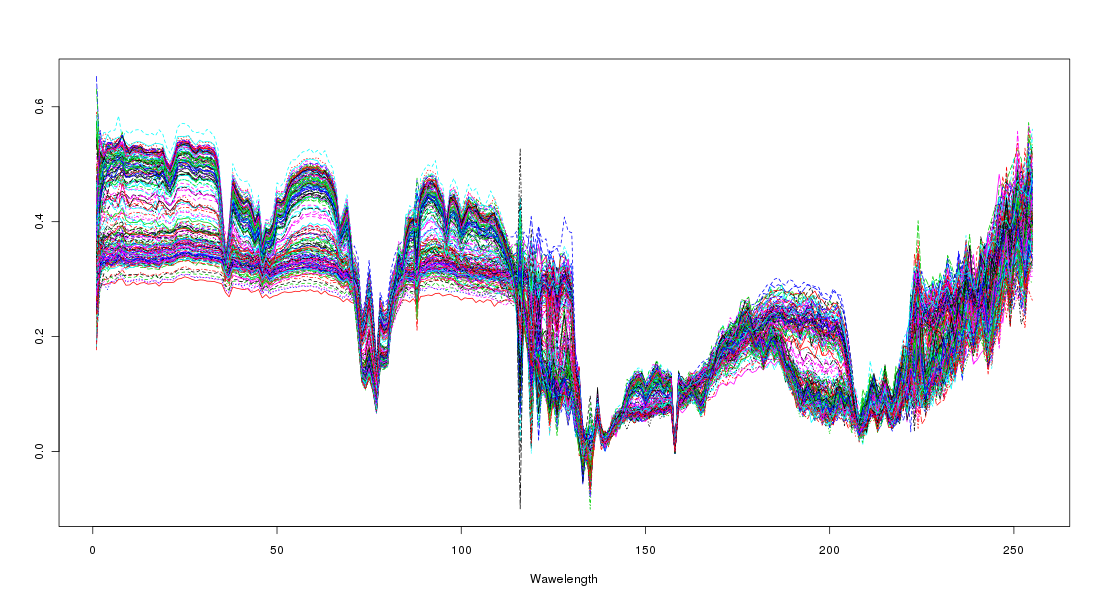}
\par\end{centering}

\caption{\label{fig:Mars-2}Some of the 38~400 measured spectra described
on 256 wavelengths (see text for details).}
\end{figure}

Here, we propose to use sparseFEM to segment hyperspectral images
of the Martian surface. Visible and near infrared imaging spectroscopy
is a key remote sensing technique to study the system of the planets.
Imaging spectrometers, which are inboard of an increasing number of
satellites, provide high-dimensional hyperspectral images. In March
2004, the OMEGA instrument (Mars Express, ESA)~\cite{Bibring2004}
has collected 310 Gbytes of raw images. The OMEGA imaging spectrometer
has mapped the Martian surface with a spatial resolution varying between
300 to 3000 meters depending on the spacecraft altitude. It acquired
for each resolved pixel the spectrum from 0.36 to 5.2 µm in 256 contiguous
spectral channels. OMEGA is designed to characterize the composition
of surface materials, discriminating between various classes of silicates,
hydrated minerals, oxides and carbonates, organic frosts and ices.
For this experiment, a $300\times128$ image of the Martian surface
is considered and a 256-dimensional spectral observation is therefore
associated to each of the 38~400 pixels. Figure~\ref{fig:Mars-1}
presents an image of the studied zone and Figure\textbf{~}\ref{fig:Mars-2}
shows some of the 38~400 measured spectra. According to the experts,
there are $K=5$ mineralogical classes to identify.

The sparseFEM-1 algorithm was applied to this dataset using the model
$\mathrm{DLM}_{[\alpha_{kj}\beta]}$ and a sparsity ratio equals to
$0.1$ (it refers to the ratio of the $\ell_{1}$ norm of the coefficient
vector relative to the norm at the full least square solution). The
sparseFEM algorithm was initialized with the results of the Fisher-EM
algorithm and the whole segmentation process took 18 hours on a 2.6
Ghz computer. Figure~\ref{fig:Mars-3} presents, on the right panel,
the segmentation into $5$ mineralogical classes of the studied zone
with the sparseFEM algorithm. In comparison, the left panel of Figure~\ref{fig:Mars-3}
shows the segmentation obtained by experts of the domain using a physical
model. It first appears that the two segmentations agree globally
on the mineralogical nature of the surface of the studied zone (60.30\%
of agreement). We recall that both segmentations do not exploit the
spatial information. When looking at the top-right quarter of the
image, we can notice that sparseFEM seems to provide a finer segmentation
than the segmentation based on the physical model. Indeed, sparseFEM
segments better than the physical model the fine {}``rivers'' which
can be seen on Figure~\ref{fig:Mars-1}. 

\begin{figure}
\begin{centering}
\begin{tabular}{cc}
\includegraphics[bb=55bp 70bp 705bp 495bp,clip,width=0.48\columnwidth,height=0.42\columnwidth]{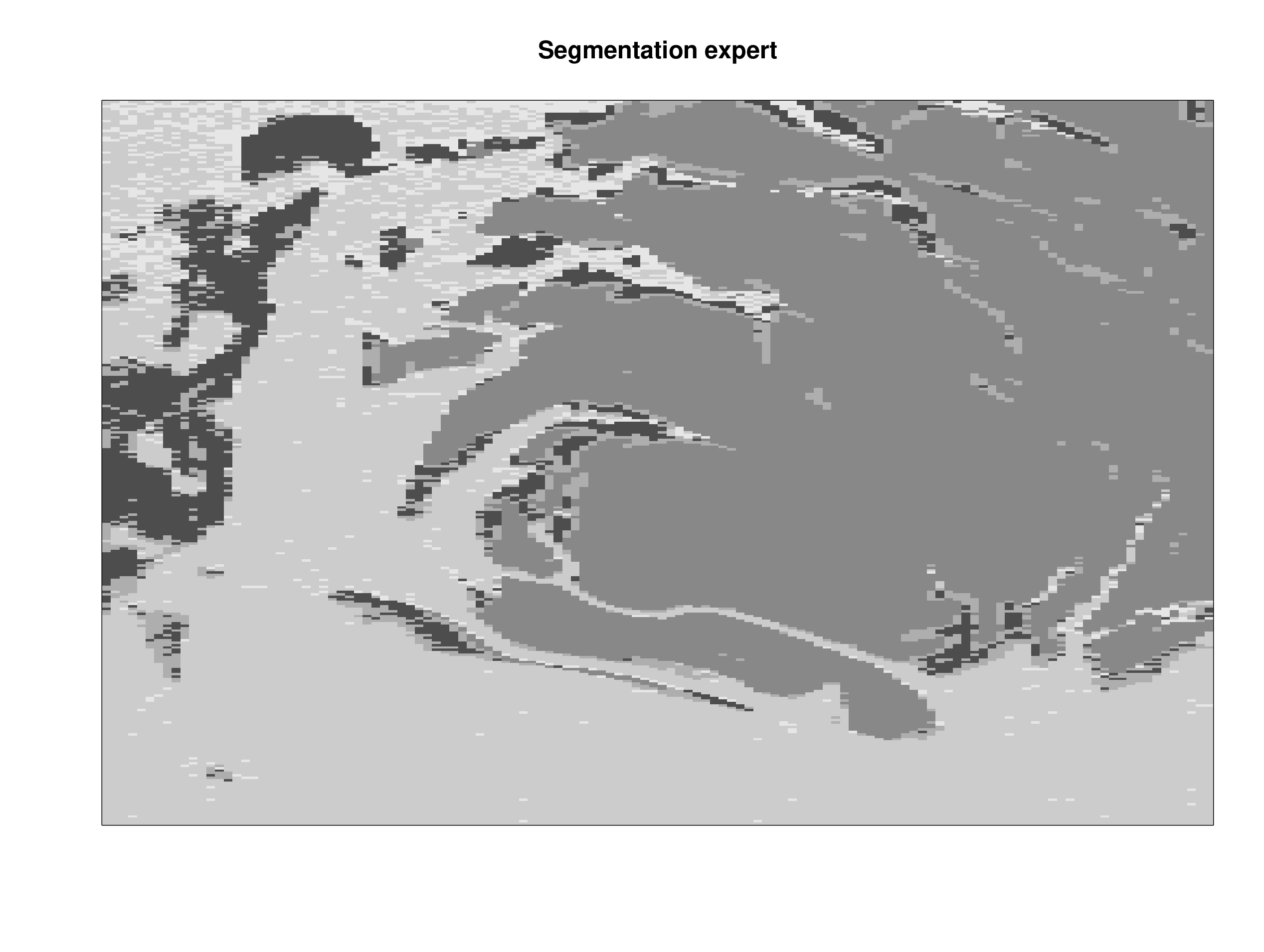} & \includegraphics[bb=55bp 70bp 705bp 495bp,clip,width=0.48\columnwidth,height=0.42\columnwidth]{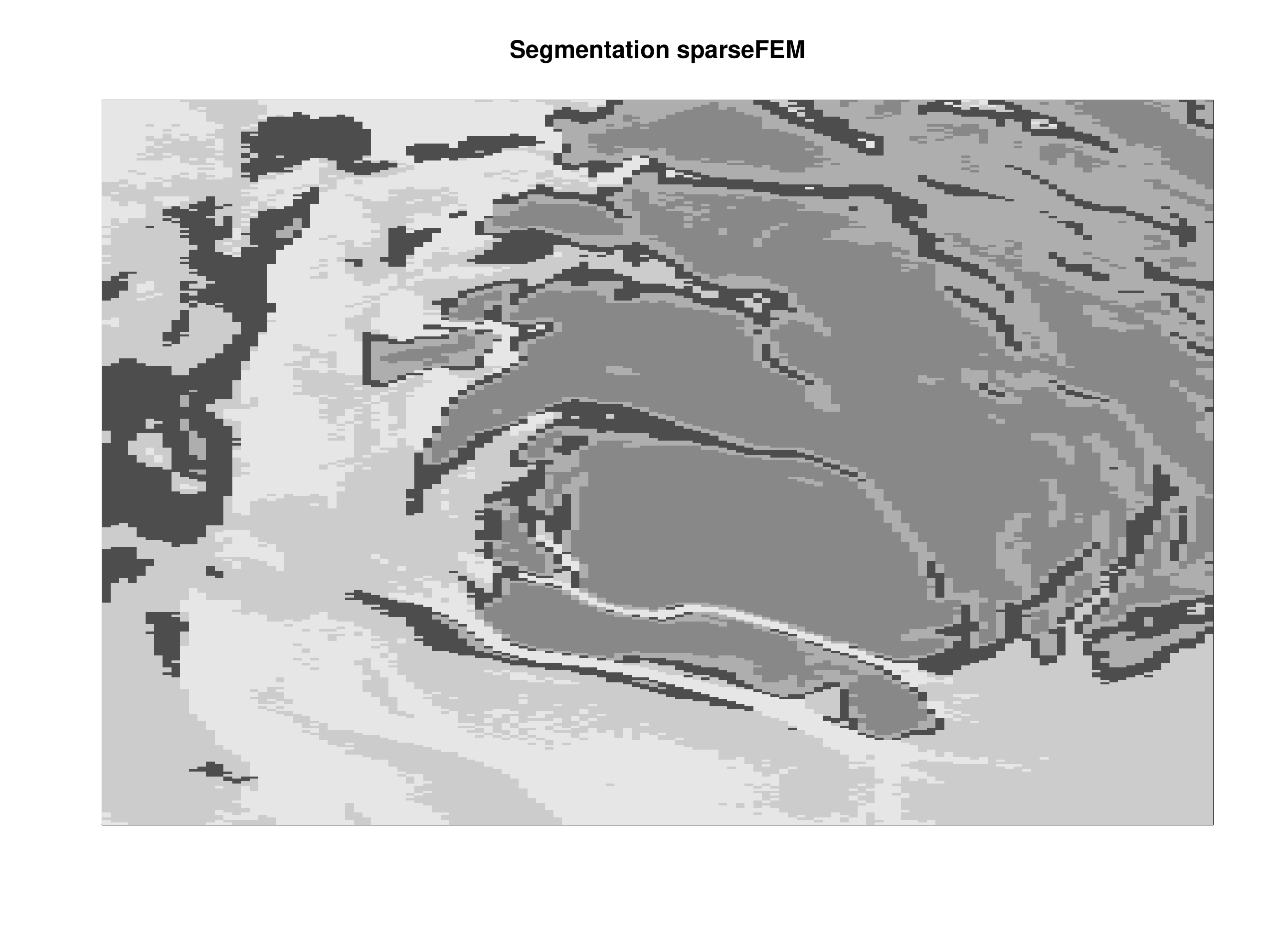}\tabularnewline
Expert segmentation & SparseFEM segmentation\tabularnewline
\end{tabular}
\par\end{centering}

\caption{\label{fig:Mars-3}Segmentation of the hyperspectral image of the
Martian surface using a physical model build by experts (left) and
sparseFEM (right).}
\end{figure}

\begin{figure}
\begin{centering}
\includegraphics[bb=30bp 45bp 910bp 380bp,clip,width=1\columnwidth]{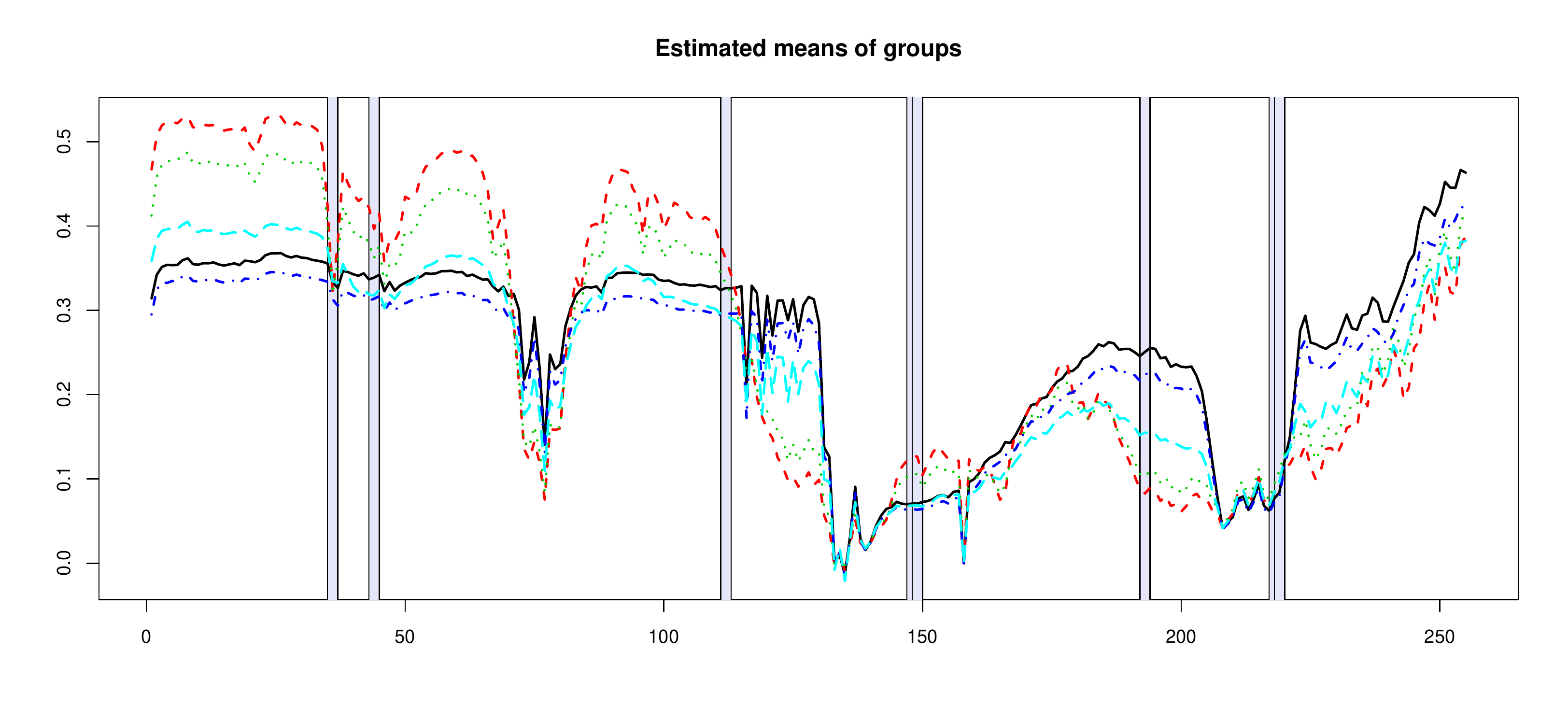}
\par\end{centering}

\caption{\label{fig:Mars-4}Mean spectra of the 5 groups formed by sparseFEM
and selection of the discriminative wavelengths (indicated by gray
rectangles).}
\end{figure}

Finally, Figure~\ref{fig:Mars-4} shows the mean spectra of the 5
groups formed by sparseFEM and the selection of the discriminative
wavelengths. SparseFEM selected 8 original variables (wavelengths)
as discriminative variables, \emph{i.e.} the rows associated to these
variables were non-zero in the loading matrix $U$. Looking closely
at the selection, we indeed notice that the first selected variable
(from left to right) discriminates the blue group from the others.
The second selected variable discriminates the red and green groups
from the black, blue and light blue groups whereas the third selected
variable allows to discriminate the red, green and black groups from
the blue and light blue groups. Similarly, the fourth and fifth selected
variables discriminate the red and green groups from the black, blue
and light blue groups whereas the sixth, seventh and eighth selected
variable allows to discriminate the red, green and light blue groups
from the blue and black groups.

A possible interest of such a selection could be the measurement of
only a tens of wavelengths for future acquisitions instead of the
256 current ones for a result expected to be similar. This could in
particular reduce the acquisition time for each pixel from a few tens
of seconds to less than one second.

\section{Conclusion}

This article has focused on variable selection for clustering with
the Fisher-EM algorithm which has been recently proposed in~\cite{Bouveyron12a}.
The aim of this work was to introduce sparsity in the Fisher-EM algorithm
and thus select the discriminative variables among the set of original
variables. We have proposed three different procedures based on a
$\ell_{1}$-penalty term. Experiments on simulations and real data
sets have shown that the three sparse versions of the Fisher-EM algorithm
are highly competitive with existing approaches of the literature.
In particular, the sparseFEM procedures present several assets regarding
existing approaches. On the one hand, they tend to select an intermediate
number of discriminative variables whereas existing approaches tend
to select either too few (Clustvarsel and Selvarclust) or too much
variables (sparse-kmeans). On the other hand, the sparseFEM procedures
perform both the clustering and the variable selection in a reasonable
time comparing to existing approaches in the case of high-dimensional
data. The sparseFEM algorithms have been also applied with success
to the segmentation of hyperspectral images of the planet Mars and
relevant parts of the spectra which well discriminate the groups have
been identified.

Among the possible extensions of this work, it may be first interesting
to use different $\ell_{1}$-penalty values according to the relevance
of each discriminative axis estimated in the Fisher-EM algorithm.
Such an approach could identify different levels of relevancy among
the original variables. Second, we used in this work a penalized BIC
criterion to select the sparsity level by evaluating the model complexity
in regard to the non-zero values as proposed by~\cite{Pan07}. Although
Zou \emph{et al}.~\cite{Zou07} showed that the number of non-zero
coefficients is an unbiased estimate of the degrees of freedom and
is asymptotically consistent in the case of penalized regression problem,
this result has no theoretical justification in the penalized GMM
context. It would be therefore interesting to obtain theoretical guarantees
of such a result in our context. Finally, since the ICL criterion~\cite{Biernacki01}
is also used to select the number of components, it would be a natural
extension to consider a penalized ICL for selecting the sparsity level
in the sparseFEM algorithms.

\section*{Acknowledgments}

The authors would like to thank Cathy Maugis for providing the results
of Selvarclust on the zoo, glass, satimage and usps358 data sets.

\bibliographystyle{plain}
\bibliography{bibliFEM}

\begin{thebibliography}{10}

\bibitem{Baek09}
J.~Baek and G.~McLachlan.
\newblock Mixtures of factor analyzers with common factor loadings:
  applications to the clustering and visualisation of high-dimensional data.
\newblock {\em Transactions on Pattern Analysis and Machine Intelligence},
  2009.

\bibitem{Baek2009}
J.~Baek, G.~McLachlan, and L.~Flack.
\newblock {Mixtures of Factor Analyzers with Common Factor Loadings:
  Applications to the Clustering and Visualisation of High-Dimensional Data}.
\newblock {\em IEEE Transactions on Pattern Analysis and Machine Intelligence},
  pages 1--13, 2009.

\bibitem{Bellman57}
R.~Bellman.
\newblock {\em {Dynamic Programming}}.
\newblock Princeton University Press, 1957.

\bibitem{Bibring2004}
J.-P. Bibring and 42~{co-authors}.
\newblock {Mars Surface Diversity as Revealed by the OMEGA/Mars Express
  Observations}.
\newblock {\em Science}, 307(5715):1576--1581, 2005.

\bibitem{Biernacki01}
C.~Biernacki, G.~Celeux, and G.~Govaert.
\newblock Assessing a mixture model for clustering with the integrated
  completed likelihood.
\newblock {\em IEEE Transactions on Pattern Analysis and Machine Intelligence},
  22(7):719--725, 2001.

\bibitem{Bouveyron12a}
C.~Bouveyron and C.~Brunet.
\newblock Simultaneous model-based clustering and visualization in the {Fisher}
  discriminative subspace.
\newblock {\em Statistics and Computing}, 22(1):301--324, 2012.

\bibitem{Bouveyron12}
C.~Bouveyron and C.~Brunet.
\newblock Theoretical and practical considerations on the convergence
  properties of the {Fisher-EM} algorithm.
\newblock {\em {J}ournal of {M}ultivariate {A}nalysis}, 109:29--41, 2012.

\bibitem{Bouveyron07}
C.~Bouveyron, S.~Girard, and C.~Schmid.
\newblock {High-Dimensional Data Clustering}.
\newblock {\em Computational Statistics and Data Analysis}, 52(1):502--519,
  2007.

\bibitem{Bouveyron07b}
C.~Bouveyron, S.~Girard, and C.~Schmid.
\newblock {High Dimensional Discriminant Analysis}.
\newblock {\em Communications in Statistics : Theory and Methods},
  36(14):2607--2623, 2007.

\bibitem{Cadima1995}
J.~Cadima and I.~Jolliffe.
\newblock Loadings and correlations in the interpretation of the principal
  components.
\newblock {\em Journal of Applied Statistics}, 22:203--214, 1995.

\bibitem{Celeux11}
G.~Celeux, M.-L. Martin-Magniette, C.~Maugis, and A.~Raftery.
\newblock Letter to the editor.
\newblock {\em Journal of the American Statistical Association}, 106(493),
  2011.

\bibitem{Efron2004}
B.~Efron, T.~Hastie, I.~Johnstone, and R.~Tibshirani.
\newblock {Least angle regression.}
\newblock {\em Annals of Statisics}, 32:407--499, May 2004.

\bibitem{Fisher36}
R.A. Fisher.
\newblock The use of multiple measurements in taxonomic problems.
\newblock {\em Annals of Eugenics}, 7:179--188, 1936.

\bibitem{Foley75}
D.H. Foley and J.W. Sammon.
\newblock An optimal set of discriminant vectors.
\newblock {\em IEEE Transactions on Computers}, 24:281--289, 1975.

\bibitem{Fukunaga90}
K.~Fukunaga.
\newblock {\em Introduction to Statistical Pattern Recognition}.
\newblock Academic. Press, San Diego, 1990.

\bibitem{Galimberti2009}
G.~Galimberti, A.~Montanari, and C.~Viroli.
\newblock {Penalized factor mixture analysis for variable selection in
  clustered data}.
\newblock {\em Computational Statistics \& Data Analysis}, 53(12):4301--4310,
  October 2009.

\bibitem{Ghahramani97}
Z.~Ghahramani and G.E. Hinton.
\newblock The {EM} algorithm for factor analyzers.
\newblock Technical report, University of Toronto, 1997.

\bibitem{Gower04}
J.C. Gower and G.B. Dijksterhuis.
\newblock Procrustes {P}roblems.
\newblock {\em Oxford {U}niversity {P}ress}, 2004.

\bibitem{Law04}
M.~Law, M.~Figueiredo, and A.~Jain.
\newblock {Simultaneous Feature Selection and Clustering Using Mixture Models}.
\newblock {\em IEEE Trans. on PAMI}, 26(9):1154--1166, 2004.

\bibitem{Liu03}
J.~Liu, J.L. Zhang, M.J. Palumbo, and C.E. Lawrence.
\newblock Bayesian clustering with variable and transformation selection.
\newblock {\em Bayesian Statistics}, 7:249--276, 2003.

\bibitem{Maugis09}
C.~Maugis, G.~Celeux, and M.-L. Martin-Magniette.
\newblock {Variable selection for Clustering with Gaussian Mixture Models}.
\newblock {\em Biometrics}, 65(3):701--709, 2009.

\bibitem{Maugis09a}
C.~Maugis, G.~Celeux, and M.-L. Martin-Magniette.
\newblock Variable selection in model-based clustering: A general variable role
  modeling.
\newblock {\em Computational Statistics and Data Analysis}, 53:3872--3882,
  2009.

\bibitem{McLachlan2003}
G.~McLachlan, D.~Peel, and R.~Bean.
\newblock Modelling high-dimensional data by mixtures of factor analyzers.
\newblock {\em Computational Statistics and Data Analysis}, (41):379, 2003.

\bibitem{McNicholas2008}
P.~McNicholas and B.~Murphy.
\newblock {Parsimonious Gaussian mixture models}.
\newblock {\em Statistics and Computing}, 18(3):285--296, 2008.

\bibitem{Montanari06}
A.~Montanari and C.~Viroli.
\newblock Dimensionally reduced mixtures of regression models.
\newblock {\em Electronic Proceedings of KNEMO, Knowledge Extraction and
  Modelling}, 2006.

\bibitem{Montanari2010}
A.~Montanari and C.~Viroli.
\newblock {Heteroscedastic Factor Mixture Analysis}.
\newblock {\em Statistical Modeling: An International journal}, 10(4):441--460,
  2010.

\bibitem{Pan07}
W.~Pan and X.~Shen.
\newblock Penalized model-based clustering with application to variable
  selection.
\newblock {\em Journal of Machine Learning Research}, 8:1145--1164, 2007.

\bibitem{Qiao09}
Z.~Qiao, L.~Zhou, and J.Z. Huang.
\newblock Sparse linear discriminant analysis with applications to high
  dimensional low sample size data.
\newblock {\em International Journal of Applied Mathematics}, 39(1), 2009.

\bibitem{Raftery06}
A.~Raftery and N.~Dean.
\newblock Variable selection for model-based clustering.
\newblock {\em Journal of the American Statistical Association},
  101(473):168--178, 2006.

\bibitem{Tibshirani01}
R.~Tibshirani, G.~Walther, and T.~Hastie.
\newblock Estimating the number of clusters in a dataset via the gap statistic.
\newblock {\em Journal of the Royal Statistical Society, Series B},
  32(2):411--423, 2001.

\bibitem{Tipping99b}
E.~Tipping and C.~Bishop.
\newblock {Mixtures of Probabilistic Principal Component Analysers}.
\newblock {\em Neural Computation}, 11(2):443--482, 1999.

\bibitem{Wang08}
S.~Wang and J.~Zhou.
\newblock Variable selection for model-based high dimensional clustering and
  its application to microarray data.
\newblock {\em Biometrics}, 64:440--448, 2008.

\bibitem{Witten10}
D.M. Witten and R.~Tibshirani.
\newblock A framework for feature selection in clustering.
\newblock {\em Journal of the American Statistical Association},
  105(490):713--726, 2010.

\bibitem{Witten09}
D.M. Witten, R.~Tibshirani, and T.~Hastie.
\newblock A penalized matrix decomposition, with applications to sparse
  principal components and canonical correlation analysis.
\newblock {\em Biostatistic}, 10(3):515--534, 2009.

\bibitem{Xie08}
B.~Xie, W.~Pan, and X.~Shen.
\newblock Penalized model-based clustering with cluster-specific diagonal
  covariance matrices and grouped variables.
\newblock {\em Electrical Journal of Statistics}, 2:168--212, 2008.

\bibitem{Xie10}
B.~Xie, W.~Pan, and X.~Shen.
\newblock Penalized mixtures of factor analyzers with application to clustering
  high-dimensional microarray data.
\newblock {\em Bioinformatics}, 26(4):501--508, 2010.

\bibitem{Yoshida04}
R.~Yoshida, T.~Higuchi, and S.~Imoto.
\newblock A mixed factor model for dimension reduction and extraction of a
  group structure in gene expression data.
\newblock {\em IEEE Computational Systems Bioinformatics Conference},
  8:161--172, 2004.

\bibitem{Yoshida06}
R.~Yoshida, T.~Higuchi, S.~Imoto, and S.~Miyano.
\newblock Array cluster: an analytic tool for clustering, data visualization
  and model finder on gene expression profiles.
\newblock {\em Bioinformatics}, 22:1538--1539, 2006.

\bibitem{Zhang09}
Z.~Zhang, G.~Dai, and M.I. Jordan.
\newblock A flexible and efficient algorithm for regularized fisher
  discriminant analysis.
\newblock In {\em Proceedings of the European Conference on Machine Learning
  and Knowledge Discovery in Databases}, pages 632--647, 2009.

\bibitem{Zou2006}
H.~Zou and R.~Hastie, T.and~Tibshirani.
\newblock {Sparse Principal Component Analysis}.
\newblock {\em Journal of Computational and Graphical Statistics},
  15(2):265--286, June 2006.

\bibitem{Zou03}
H.~Zou and T.~Hastie.
\newblock Regularization and variable selection via the elastic net.
\newblock {\em Journal of the Royal Statistical Society}, 67:301--320, 2005.

\bibitem{Zou07}
H.~Zou, T.~Hastie, and R.~Tibshirani.
\newblock On the degrees of freedom of the {L}asso.
\newblock {\em Annals of Statistics}, 35(5):2173--2192, 2007.

\end{thebibliography}

\end{document}